\newcommand{\ALG}{\textsc{Alg}\xspace}
\newcommand{\ALGB}{\textsc{Alg}^\textrm{M}\xspace}
\newcommand{\ALGP}{\textsc{Alg}^\textrm{P}\xspace}
\newcommand{\AS}{\ensuremath{\textsc{Alg}_1}\xspace}
\newcommand{\OPT}{\textsc{Opt}\xspace}
\newcommand{\OFF}{\textsc{Off}\xspace}
\newcommand{\ONL}{\textsc{Onl}\xspace}
\newcommand{\VV}{\ensuremath{V^2}}
\newcommand{\E}{\ensuremath{\mathbf{E}}}
\newcommand{\M}{\ensuremath{M}}
\newtheorem{lemma}{Lemma} 
\newtheorem{theorem}{Theorem}
\newtheorem{corollary}[theorem]{Corollary}
\newtoks\titleboxnotes
\begin{document}
\title{Optimizing Reconfigurable Optical Datacenters: The Power of
Randomization\thanks{%
Research supported by the European Research Council (ERC), consolidator project
Self-Adjusting Networks (AdjustNet), grant agreement No.~864228, and by the
Polish National Science Centre grant no 2022/45/B/ST6/00559.
For the purpose of open access, the authors have applied CC-BY public
copyright license to any Author Accepted Manuscript (AAM) version arising from
this submission.
}}

\renewcommand{\Authfont}{\sectfont}
\renewcommand{\Affilfont}{\mdseries \itshape \small}

\author[1]{Marcin Bienkowski}
\author[2]{David Fuchssteiner}
\author[3]{Stefan Schmid}
\affil[1]{University of Wrocław, Poland}
\affil[2]{University of Vienna, Austria}
\affil[3]{Technical University of Berlin, Germany}
\date{}

\maketitle

\begin{abstract}
Reconfigurable optical topologies are a promising new technology
to improve datacenter network performance and cope with the explosive growth
of traffic.
In particular, these networks allow to directly and adaptively connect
racks between which there is currently much traffic, hence
making an optimal use of the bandwidth capacity by avoiding multi-hop forwarding.

This paper studies the dynamic optimization of such reconfigurable topologies,
by adapting the network to the traffic in an online manner. The underlying algorithmic problem can be described as an online maximum weight
$b$-matching problem, a~generalization of maximum weight matching where each
node has at most $b \geq 1$ incident matching edges. 

We make the case for a randomized approach for matching optimization. 
Our main contribution is a $O(\log b)$-competitive algorithm and we show that it is asymptotically optimal.
This algorithm is hence exponentially better than the best possible deterministic online algorithm. 

We complement our theoretical results with extensive trace-driven
simulations, based on real-world datacenter workloads.
\end{abstract}


\section{Introduction}\label{sec:intro}

Datacenter networks have become a critical infrastructure of our digital society, and the performance requirements on these networks are increasingly stringent. Indeed, especially inside datacenters, traffic is currently growing explosively, e.g., due to the popularity of data-centric applications related to AI/ML~\cite{singh2015jupiter}, but also due to the trend of resource disaggregation (e.g., requiring fast access to remote resources such as GPUs), as well as hardware-driven and distributed
training~\cite{li2019hpcc}.
Accordingly, over the last years, great efforts have been made to improve datacenter networks~\cite{kellerer2019adaptable}. 

A particularly innovative approach to improve datacenter network performance, is to dynamically adjust the datacenter topologies towards the workload they serve, in a \emph{dynamic} and \emph{demand-aware} manner. Such adjustments are enabled by emerging reconfigurable optical communication technologies, such as optical circuit switches, which provide dynamic \emph{matchings} between racks~\cite{sirius,zhou2012mirror, zerwas2023duo,addanki2023mars, rotornet, opera, helios, firefly, megaswitch, quartz, chen2014osa,projector,cthrough,splaynet,venkatakrishnan2018costly,schwartz2019online,proteus,osa,100times,fleet,osn21,griner2021cerberus}. 
Indeed, empirical studies have shown that datacenter traffic features much spatial and temporal structure~\cite{sigmetrics20complexity,benson2010network,roy2015inside,foerster2023analyzing-www}, which may be exploited for optimization.

This paper studies the optimization problem underlying such reconfigurable datacenter networks. In particular, we consider a typical leaf-spine datacenter network where a set of racks are interconnected by $b$ optical circuit switches, each of which provides one matching between top-of-rack switches, so $b$ matchings in total.
In a nutshell, the goal is to optimize these matchings so that the link resources are used optimally, i.e., the number of hops taken per communicated bit is minimized (more details will follow). 

\subsection{The model}

Our problem optimizing an optical reconfigurable datacenter network can be modeled as an~online dynamic version of the 	classic
$b$-matching problem~\cite{anstee1987polynomial}. In this problem, each node can be connected with at 
most $b$ other nodes (using optical links), which results in a $b$-matching.

\paragraph{Input.}

We are given an arbitrary (undirected) static weighted
and connected network on the set of nodes $V$ (i.e., the top-of-rack switches) connected by a~set of
non-configurable links $F$: the fixed network (based, e.g., on a Clos or fat-tree topology). 
Let $\VV$ be the set of
all possible unordered pairs of nodes from~$V$. For any node pair 
$e = \{s,t\} \in \VV$, we call $s$ and $t$ the \emph{endpoints} of~$e$,
and we let $\ell_e$ denote the length of a
shortest path between nodes $u$ and~$v$ in graph $G=(V,F)$.
Note that $u$ and $v$
are not necessarily directly connected in~$F$.

The fixed network can be enhanced with reconfigurable links, providing a
matching of degree~$b$: Any node pair from~$\VV$ may become a \emph{matching
edge} (such an edge corresponds to a~reconfigurable optical link), but the
number of matching edges incident to any node has to be at most~$b$, for a given
integer $b \geq 1$. 

The demand is modeled as a sequence of 
communication requests\footnote{A request could either be an individual packet or a certain amount of bytes transferred.
This model of a~request sequence is often considered in the literature and is more fine-grained than, e.g., a sequence of traffic matrices.} 
$\sigma=\{s_1,t_1\},\{s_2,t_2\}, \ldots$ revealed over time,
where $\{s_i,t_i\} \in V^2$.

\paragraph{Output and objective.} 

The task is to 
schedule the reconfigurable links over time, that is,
to maintain a dynamically
changing $b$-matching $M \subseteq \VV$. This means that each node pair from~$M$ is called 
a \emph{matching edge}, and we require that each node has
at most $b$ incident matching edges. We aim to jointly minimize routing
and reconfiguration costs, defined below.

\paragraph{Costs.}

We use the cost of routing a request as our cost measurement. 
This measurement is directly related to the throughput in the network, because routing can be seen as a form of bandwidth tax~\cite{rotornet,griner2021cerberus}.
Furthermore, analytical results show that the throughput of a network is inversely proportional to the route length~\cite{addanki2023mars, namyar2021throughput}.
Empirical results also show that lower routing costs have a benefit on an application's execution time~\cite{griner2021cerberus}.
In our model, the cost of serving (i.e., routing cost) a request $e = \{s,t\}$ depends on
whether $s$ and $t$ are connected by a~matching edge. A~given
request can either only take the fixed network or a direct matching edge (i.e.,
routing is segregated~\cite{projector,ancs18}). If $e \notin M$, the requests are routed
exclusively on the fixed network, and the corresponding cost is~$\ell_e$.
(shorter paths imply smaller resource costs, i.e., lower ``bandwidth
tax''~\cite{rotornet,griner2021cerberus}). If $e \in M$, the request is served by the matching
edge, and the routing costs 1. 

Once the request is served, an algorithm may modify the set of matching edges: 
reconfiguration costs $\alpha$ per each node pair added or removed from the
matching $M$. (The reconfiguration cost and time can be assumed to be independent
of the specific edge.)

\paragraph{Online algorithms.} 

A (randomized) algorithm $\ONL$ is \emph{online}
if it has to take decisions without knowing the future requests
(in our case, e.g., which edge to include next in the matching
and which to evict). 
Such an algorithm is said to be $\rho$-competitive~\cite{BorEl-98} 
if there exists $\beta$ such that 
for any input instance~$\sigma$, it holds that 
\[
	\E[\ONL(\sigma)] \leq \rho \cdot \OPT(\sigma) + \beta \,,
\]
where $\OPT(\sigma)$ is the cost of the optimal (offline) solution 
for~$\sigma$ and $\E[\ONL(\sigma)]$ is the 
expected cost of algorithm \ONL on~$\sigma$. The expectation
is taken over all random choices of $\ONL$, i.e., the input itself
is worst-possible, created adversarially.
It is worth noting that $\beta$ can depend on 
the parameters of the network, such as the number of nodes, 
but has to be independent of the actual sequence of requests. 
Hence, in the long run, this additive term $\beta$ becomes negligible
in comparison to the actual cost of online algorithm $\ONL$.

\paragraph{Generalization to (b,a)-matching problem.} 

The comparison of an online algorithm to the fully clairvoyant 
offline solution may seem unfair. 
Therefore, we consider our $b$-matching problem also in a more generalized setting. 
We denote this generalization as \emph{$(b,a)$-matching}, where the restrictions 
imposed on an online algorithm remain unchanged: the number of 
matching edges incident to any node has to be at most $b$. However, 
the optimal solution is more constrained: the number of 
matching edges has to be at most $a \leq b$ for any node. 
Similar settings are studied frequently in the literature, as \emph{resource augmentation} models,
e.g., in the context of paging and 
caching~\cite{SleTar85,Young91,Young94,BansalBN12} or scheduling problems, see, e.g.,
\cite{ChadhaGKM09,KalyanasundaramP00,spaa21rdcn}.

\subsection{Our contributions}

Motivated by the problem of how to establish topological shortcuts
in datacenter networks supported by $b$ optical switches, we 
present a randomized online algorithm for the 
$(b,a)$-matching problem. 

Our proposed solution achieves a competitive ratio of 
$O((1 + \max_e \{ \ell_e \} / \alpha) \cdot \log (b/(b-a+1))$, which is almost optimal: we will also show a lower bound of
$\Omega(\log (b/(b-a+1))$. 
We note that in all practical applications $\alpha$ is by several orders of 
magnitude greater than $\ell_{\max}$, and thus the factor $1 + \max_e \{ \ell_e \} / \alpha$ is close to $1$. 

When our algorithm is compared to an optimal algorithm that has to maintain
$b$-matching as well (i.e., when $a=b$), the competitive ratio of our algorithm 
becomes $O((1 + \max_e \{ \ell_e \} / \alpha) \cdot \log b)$ and the lower bound becomes $\Omega(\log b)$.
It is worth noting that the best deterministic online algorithm for this problem
is only $\Theta(b)$-competitive. 

Our analysis relies on a reduction to a
uniform case (where $\alpha = 1$ and all path lengths are equal to $1$), 
which allows us to avoid delicate charging arguments
and enables a simplified analytical treatment. 

We show that our randomized algorithm is not only better with respect to the worst-case competitive ratio than the deterministic online algorithm in theory,
but also attractive in practice. 
Our empirical evaluation, based on various real-world datacenter traces,
shows that our algorithm is significantly faster while achieving roughly
the same matching quality.

As a contribution to the research community, to facilitate follow-up work, and to 
ensure reproducibility, we open-source our implementation and all experimental artifacts
together with this paper. 

\subsection{Organization}

The remainder of this paper is organized as follows.
We present our optimization framework together with a formal analysis in \S\ref{sec:algorithm}.
\S\ref{sec:simulations} reports on our empirical results.
After reviewing related work in \S\ref{sec:relwork}, we conclude our contribution
and discuss future work in \S\ref{sec:conclusion}.

\section{Algorithm and Analysis}\label{sec:algorithm}

Recall that our goal is to adapt the reconfigurable datacenter links such that bandwidth resources are used optimally (i.e., bits travel the least number of hops) while accounting for reconfiguration costs, in an online manner. To this end, we aim to compute heavy matchings between nodes, e.g., the racks resp.~top-of-rack switches in the reconfigurable datacenter. 

We first define the \emph{uniform} variant of the $(b,a)$-matching. There, 
the distance between each pair of nodes is~$1$ (i.e., $\ell_e = 1$ for any $e \in \VV$)
and $\alpha = 1$ (recall that $\alpha$ is the cost of adding or removing a matching edge to/from $M$).

In the following, let 
\begin{align*}
    \ell_{\max} = \max_e \{ \ell_e \} && \text{and} && \gamma = 1 + \ell_{\max} / \alpha.
\end{align*}
We first show that it suffices to solve the uniform variant of the problem:
once we do this, we can get an~algorithm for the general variant, losing an additional factor 
of~$O(\gamma)$. 

Afterwards, we show a simple algorithm that uses known randomized algorithms
for the paging problem to solve the uniform variant of the $(b,a)$-matching. This will yield 
a randomized $O(\gamma \cdot \log (b/(b-a+1))$-competitive algorithm for the $(b,a)$-matching problem.

\subsection{Reduction to uniform case}
\label{sec:reduction}

We now show a reduction to the uniform case.

\begin{theorem}
\label{thm:reduction}
Assume there exists a (deterministic or randomized) $c$-competitive 
algorithm \AS for the variant of the $(b,a)$-matching problem for the uniform case.
Then there exists a (respectively, deterministic or randomized)~$4 \gamma \cdot c$-competitive 
algorithm \ALG for the $(b,a)$-matching problem.
\end{theorem}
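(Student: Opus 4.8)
The plan is to simulate an algorithm $\AS$ for the uniform instance on a "rescaled" version of the real input, and to translate its matching decisions back into decisions for $\ALG$ on the original instance. The key idea is that in the uniform instance a matching edge costs $1$ per request that is not served by it (versus cost $\ell_e$ in the real instance) and costs $1$ to toggle (versus $\alpha$ in the real instance). So I would group the request sequence into epochs: since serving a request $e \notin M$ costs $\ell_e \le \ell_{\max}$ while a reconfiguration costs $\alpha$, a natural unit is to let roughly $\alpha / \ell_e$ consecutive requests on the same pair $e$ in the real instance correspond to a single request on $e$ in the uniform instance. Concretely, $\ALG$ maintains a counter for each pair $e$; every time the accumulated routing cost $\ell_e$ (summed over requests to $e$ since the last "uniform request" for $e$) reaches $\alpha$, it feeds one request $e$ to the simulated $\AS$, and then copies $\AS$'s resulting $b$-matching (which is a feasible $b$-matching, and respects the tighter bound $a$ in the offline benchmark) as its own matching, paying the $\le \alpha$ reconfiguration cost per toggled edge. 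Between such events $\ALG$ keeps its matching unchanged.

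Next I would bound $\ALG$'s cost. Its routing cost splits into (i) the cost paid on requests $e$ that are currently in $\ALG$'s matching — each costs $1$, and there can be charged against the corresponding uniform request served by $\AS$ at cost $1$; and (ii) the cost paid on requests $e$ not in $\ALG$'s matching — these accumulate to at most $\alpha$ per uniform request triggered, i.e., at most $\alpha$ times the number of uniform requests $\AS$ serves on pairs not in its matching, plus an $O(\alpha)$ additive leftover per pair from partially-filled counters (absorbed into $\beta$). Its reconfiguration cost is at most $\alpha$ times the number of edge toggles $\AS$ makes. Since in the uniform instance each toggle costs $1$ and each unserved request costs $1$, both of these are at most $\alpha \cdot \AS(\sigma')$, where $\sigma'$ is the uniform instance; and a request served by $\AS$'s matching contributes $1$ there and $1$ to $\ALG$ here. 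Collecting terms gives $\ALG(\sigma) \le O(\alpha) \cdot \AS(\sigma') + \beta \le O(\alpha) \cdot c \cdot \OPT_{\mathrm{unif}}(\sigma') + \beta$ (in expectation, if $\AS$ is randomized).

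Then I would relate $\OPT_{\mathrm{unif}}(\sigma')$ to $\OPT(\sigma)$, the optimum for the real instance. Given an optimal offline solution for $\sigma$, I want to produce a feasible solution for the uniform instance $\sigma'$ of cost $O(\OPT(\sigma)/\alpha)$. The natural construction: whenever the real optimum holds edge $e$ in its matching at the time a uniform request $e$ is generated, let the uniform solution hold $e$ as well. A uniform request $e$ that the real optimum does \emph{not} serve by a matching edge was generated only after $\alpha$ worth of routing cost $\ell_e$ accrued on $e$ in the real instance while $e \notin M_{\OPT}$, so the real optimum paid at least $\alpha$ for the requests underlying that one unit of uniform "miss" cost. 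Likewise, each time the uniform solution toggles an edge, the real optimum toggled it too (paying $\alpha$) or the change is otherwise charged to routing cost comparable to $\alpha$; a careful version of this argument — tracking each edge's state transitions and using $\gamma = 1 + \ell_{\max}/\alpha$ to cover the slack from requests that fall between a uniform event and an optimal reconfiguration — yields $\OPT_{\mathrm{unif}}(\sigma') \le O(\gamma) \cdot \OPT(\sigma)/\alpha$. Combining with the previous paragraph gives $\E[\ALG(\sigma)] \le O(\gamma) \cdot c \cdot \OPT(\sigma) + \beta$, and tightening the constants gives the claimed $4\gamma c$.

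The main obstacle will be the $\OPT_{\mathrm{unif}} \le O(\gamma)\,\OPT/\alpha$ direction: the real optimum can reconfigure at moments that do not align with the uniform-request epochs of $\ALG$'s simulation, so the translation of $\OPT$'s schedule into a uniform schedule must handle edges whose state changes mid-epoch, and must make sure each unit of cost charged in the uniform instance is matched by $\Omega(\alpha/\gamma)$ cost in the real instance without double-counting across the (potentially many) edges and epochs. This is exactly the kind of "delicate charging argument" the introduction alludes to; I would handle it edge-by-edge, partitioning time for each pair $e$ into maximal intervals on which $\OPT$'s status of $e$ is constant, and showing the number of uniform requests/toggles on $e$ inside each such interval is within an $O(\gamma)$ factor of ($\OPT$'s cost attributable to $e$ in that interval)$/\alpha$.
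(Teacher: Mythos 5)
Your proposal is correct and follows essentially the same route as the paper: subsample every $\lceil \alpha/\ell_e\rceil$-th request to each pair $e$ into a uniform instance, run \AS there, copy its matching decisions, and establish the two charging inequalities via a per-pair partition of the sequence into epochs ending at the subsampled requests. The only bookkeeping difference is that the $\gamma$ overshoot factor (from an epoch's accumulated routing cost exceeding $\alpha$ by up to $\ell_{\max}$) belongs in the $\ALG$-versus-\AS inequality rather than in the $\OPT_{\mathrm{unif}}$-versus-$\OPT$ one, where a clean $2/\alpha$ bound holds and the ``delicate charging'' you anticipate resolves simply by letting the uniform offline solution mimic all of $\OPT$'s reconfigurations (redundant ones included) and arguing per epoch that $\OPT$ either pays nothing for $e$ or at least $\alpha$.
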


\begin{proof}
Let~$I$ be the input for algorithm \ALG. \ALG creates (in an online manner) another
input instance~$I_1$.
For any node pair~$e$ let 
\[ 
    k_e = \lceil \alpha / \ell_e \rceil.
\]
For any node pair~$e$ independently, we call each~$k_e$-th
request to~$e$ in~$I$ \emph{special}; the remaining ones are 
called \emph{ordinary}. Input~$I_1$ contains only special requests from~$I$.
Furthermore, for~$I_1$, we assume that $\alpha = 1$ and the distance 
between each pair of nodes is~$1$. \ALG simply runs
$\AS$ on~$I_1$ and repeats its reconfiguration choices (modifications of
the matching $\M$). In particular, this means that $\ALG$ may performs changes to~$M$ 
only upon the~$k_e$-th occurrence of a given node pair~$e$. 

To prove the theorem, we will show the following inequalities 
for some value~$\beta$ independent of the input~$I$.
\begin{enumerate}
\item $\ALG(I) \leq 2 \gamma \cdot \alpha \cdot \AS(I_1) + |V^2| \cdot \gamma \cdot \alpha$
\item $\AS(I_1) \leq c \cdot \OPT(I_1) + \beta$
\item $\OPT(I_1) \leq (2 / \alpha) \cdot \OPT(I)$
\end{enumerate}

\paragraph{Showing the first inequality.}

We fix any node pair~$e$ and we analyze 
the cost pertaining to handling~$e$ both by \ALG and~$\AS$. 
We partition~$I$ into disjoint intervals, so that the first
interval starts at the beginning of the input sequence, and 
each interval except the last one ends at the special request to~$e$ inclusively.
(We note that the partition differs for different choices of $e$.)
We look at any non-last interval~$S$ and its counterpart~$S_1$ in the input~$I_1$. Note that except~$k_e$ requests to~$e$ and one request
to~$e$ in~$S$, these intervals may contain also requests to other node pairs. We now
compare the costs pertaining to~$e$ in~$S$ incurred on \ALG (denoted $\ALG(S,e)$), 
to the costs pertaining to~$e$ in~$S_1$ incurred on~$\AS$ (denoted $\AS(S_1,e)$).
We consider two cases.

\begin{itemize}
\item 
If $\AS(S_1,e) = 0$, then 
$\AS$ must have~$e$ in~$\M$ when it is requested at the end of~$S_1$, and 
moreover, it cannot perform any reconfigurations that touch~$e$. This means that it must 
have~$e$ in~$M$ right after the special request to~$e$ preceding~$S_1$, and 
keep~$e$ in~$\M$ throughout the whole considered interval~$S_1$.
As $\ALG$ is mimicking the choices of $\AS$, it has~$e$ in $\M$ during~$S$, and thus all~$k_e$ requests to~$e$ in~$S$
are free for \ALG and $\ALG(S,e) = 0$ as well.

\item Otherwise, let $q = \AS(S_1,e) \geq 1$. As $\AS$ performed at most $q$
reconfigurations concerning $e$ in $S_1$, so does \ALG in~$S$. Furthermore, \ALG pays at
most for~$k_e$ requests to~$e$ ($\ell_e$ for each). 
We note that 
\begin{align*} 
    k_e \cdot \ell_e & < (\alpha / \ell_e + 1) \cdot \ell_e = \alpha + \ell_e \\
    & \leq \alpha + \ell_{\max} = \gamma \cdot \alpha,
\end{align*}
and thus $\ALG(S,e) \leq q \cdot \alpha + k_e \cdot \ell_e 
\leq (q + \gamma) \cdot \alpha < 2 q \gamma \cdot \alpha 
= 2 \gamma \cdot \alpha \cdot \AS(S_1,e)$.
\end{itemize}

The first inequality follows by summing over all (non-last) intervals 
and all possible choices of~$e$. The term 
$|\VV| \cdot \gamma \cdot \alpha$ upper-bounds the total cost in the last 
intervals: there are $|\VV|$ of them, and in each the cost is at most~$k_e \cdot \ell_e < \gamma \cdot \alpha$.

\paragraph{Showing the second inequality.}

This one follows immediately as $\AS$ is $c$-competitive on input~$I_1$.

\paragraph{Showing the third inequality.}

We again fix any edge~$e$ and consider the same
partitioning into intervals as above. This time, however, on the basis of
solution $\OPT(I)$, we construct an offline solution $\OFF$ for the input~$I_1$
by mimicking all reconfiguration choices of $\OPT$ on input~$I_1$. Note that~$I_1$ contains only a subset of requests from~$I$ and thus, in response to a
single request in~$I_1$, $\OFF$ may react with a sequence of reconfigurations
that are redundant (i.e., remove some edge from $\M$ and then fetch it back).
However, as the matching~$\M$ that \OFF maintains is the same as that of \OPT,
it is feasible.

We now argue that for any interval~$S$ in~$I$ and the corresponding interval~$S_1$ in~$I_1$,
it holds that $\OFF(S_1,e) \leq (2 / \alpha) \cdot \OPT(S,e)$. 

\begin{itemize}
\item If $\OPT(S,e) < \alpha$, then $\OPT$ performs no reconfigurations concerning $e$.
In this case $\OPT$ has to pay for all requests to~$e$ within~$S$ or none of them.
The first case would imply that its cost is at least~$k_e \cdot \ell_e \geq \alpha$,
and thus the only possibility is that it does not pay for any request. 
In this case, \OFF does not perform any reorganizations pertaining to~$e$ either and
does not pay for the only request to~$e$ in~$S_1$. Hence, $\OFF(S_1,e) = 0 =
\OPT(S,e)$. 

\item Otherwise, let $q = \OPT(S,e) \geq \alpha$. $\OPT$ performs at most
$\lfloor q / \alpha \rfloor$ reconfigurations pertaining to~$e$. $\OFF$
executes the same reconfigurations, and hence it pays at most $\lfloor q / \alpha \rfloor$ for
reconfigurations in~$S_1$ pertaining to~$e$ and at most~$1$ for the only request
to~$e$ in $S_1$. Thus, $\OFF(S_1,e) \leq \lfloor q / \alpha \rfloor + 1 \leq 2 \cdot (q /
\alpha) = (2 / \alpha) \cdot \OPT(S,e)$.
\end{itemize}

The third inequality now follows by summing 
the relation $\OFF(S_1,e) \leq (2 / \alpha) \cdot \OPT(S,e)$
over all intervals (including the last ones) and node pairs~$e$, 
and observing that the optimal cost for~$I_1$ can be only smaller than the cost of $\OFF$ on~$I_1$.

\paragraph{Combining all three inequalities.}

By combining all three inequalities, we obtain that 
\begin{align*}
    \ALG(I) 
    & \leq 2 \gamma  \cdot \alpha \cdot \AS(I_1) + |V^2| \cdot \gamma \cdot \alpha \\ 
    & \leq 2 \gamma \cdot \alpha \cdot c \cdot \OPT(I_1) + 2 \gamma \cdot \alpha \cdot \beta
        + |V^2| \cdot \gamma \cdot \alpha \\
    & \leq 4 \gamma \cdot c \cdot \OPT(I) + (|V^2| + 2\beta) \cdot \gamma \cdot \alpha,
\end{align*}
which concludes the proof.
\end{proof}

\subsection{Algorithm for uniform case}
\label{sec:alg-uniform}

Below we present a randomized algorithm \AS, which solves the uniform variant of the 
problem (where $\ell_e = 1$ for all $e \in \VV$ and $\alpha = 1$). 

Let the $(b,a)$-paging problem be a resource-augmented variant of the paging problem~\cite{SleTar85}
where an online algorithm has cache of size~$b$ and optimal algorithm has 
a cache of size~$a$. 

\begin{theorem}
\label{thm:paging_reduction}
Assume there exists a (deterministic or randomized) 
$f(b,a)$-competitive algorithm for the $(b,a)$-paging problem
for some function $f: \mathbb{N}_{\geq 0} \times \mathbb{N}_{\geq 0} \to \mathbb{R}_\geq 0$.
Then, there exists an~(also deterministic or randomized) $O(f(b,a))$-competitive algorithm 
for the uniform variant of the $(b,a)$-matching problem.
\end{theorem}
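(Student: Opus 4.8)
The plan is to run, independently at every node, one copy of the given $(b,a)$-paging algorithm and to read off the matching from the resulting caches. For each $v \in V$ let $P_v$ be an instance of the $f(b,a)$-competitive paging algorithm with cache size $b$ and page universe $V \setminus \{v\}$; we may assume each $P_v$ is lazy (evicts only when forced to make room), at no increase in cost. The online algorithm \AS, upon a request $\{s,t\}$, first serves it using the current matching, then feeds the page request ``$t$'' to $P_s$ and the page request ``$s$'' to $P_t$, and finally sets
\[
    M \;=\; \bigl\{\, \{u,w\} : w \in \mathrm{cache}(P_u)\ \text{and}\ u \in \mathrm{cache}(P_w) \,\bigr\}.
\]
Since $|\mathrm{cache}(P_u)| \le b$ for every $u$, this $M$ is always a valid $b$-matching. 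Note that the page-request stream $\sigma_v$ seen by $P_v$ is exactly the subsequence of $\sigma$ involving $v$ (with $\{v,w\}$ renamed to ``$w$''); in particular it is a fixed function of the adversarial input and does not depend on any random choices, so each $P_v$ faces an oblivious instance and its competitive guarantee applies.

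Next I would bound the cost of \AS by the paging costs. A request $\{s,t\}$ is served for free by \AS precisely when $\{s,t\}$ is already in $M$, i.e.\ precisely when it is a hit in both $P_s$ and $P_t$; hence the total routing cost of \AS is at most $\sum_v (\text{number of misses of }P_v)$. For reconfiguration, observe that a single page load into some $P_v$ can insert at most one edge into $M$ (the edge to the freshly loaded page) and, since $P_v$ is lazy, evicts at most one page and so removes at most one edge from $M$; therefore the total reconfiguration cost of \AS is at most $2\alpha = 2$ times the total number of loads, which for a lazy algorithm equals its total miss cost. Altogether $\AS(\sigma) \le 3\sum_v \mathrm{cost}(P_v,\sigma_v)$ pointwise, and by $f(b,a)$-competitiveness, $\mathrm{cost}(P_v,\sigma_v) \le f(b,a)\cdot \mathrm{OPT}_a(\sigma_v) + \beta_v$ (in expectation if randomized), where $\mathrm{OPT}_a(\sigma_v)$ is the optimal paging cost for $\sigma_v$ with cache size $a$ and $\sum_v \beta_v$ is a quantity independent of $\sigma$.

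It then remains to relate $\sum_v \mathrm{OPT}_a(\sigma_v)$ to $\OPT(\sigma)$, and here I would exploit that the optimal matching solution has degree at most $a$. Fix an optimal solution; for each $v$ build a size-$a$ paging solution $Q_v$ for $\sigma_v$ whose cache at every step mirrors the set of $M$-edges incident to $v$ in the optimal solution, except that at a step whose requested page ``$t$'' is not currently present we temporarily insert $t$ (evicting an arbitrary page if the cache is full) and restore the previous contents at the next step of $\sigma_v$. Each load performed by $Q_v$ is charged either to an edge that the optimal solution inserts at $v$, or to a request at $v$ that the optimal solution does not serve through its matching, with at most one extra ``restore'' load per charge of the second type. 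Summing over $v$, the first kind totals at most twice the reconfiguration cost of the optimal solution and the second kind at most twice its routing cost (in the uniform case every unmatched request costs exactly $1$), so $\sum_v \mathrm{OPT}_a(\sigma_v) \le \sum_v \mathrm{cost}(Q_v) = O(\OPT(\sigma))$. Combining the three estimates yields $\AS(\sigma) \le O(f(b,a))\cdot \OPT(\sigma) + O\!\left(\sum_v \beta_v\right)$ (taking expectations if the paging algorithm is randomized), with absolute constants; and the constructed algorithm is deterministic or randomized exactly as the paging algorithm it uses.

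The main obstacle is the coupling between the two endpoints of an edge: an edge is present in $M$ only when both endpoint caches agree, so I must argue carefully in both directions. In the ``algorithm'' direction this means showing that, despite the coupling, every matching reconfiguration is still chargeable to a single paging load (the ``at most one edge in, at most one edge out per load'' observation). In the ``lower bound'' direction it means that an optimal matching solution, which may route a request over the fixed network without ever inserting the corresponding edge into $M$, must nonetheless be convertible into valid bounded-cache paging solutions at both endpoints; this is exactly what the temporary-insertion/restore device accomplishes, at the price of an absolute constant factor. Getting both directions through with only absolute constants in front of $f(b,a)$ — no dependence on $b$, $a$, or $|V|$ — is the crux; the additive terms $\beta_v$ accumulate into a quantity that depends on $|V|$ (and the cache size) but not on $\sigma$, which the competitiveness definition permits.
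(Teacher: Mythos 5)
Your proposal is correct and follows essentially the same route as the paper's proof: one paging instance per node, the matching defined as the edges present in both endpoints' caches, the algorithm's cost charged to the sum of per-node paging costs, and the optimal matching solution converted into feasible per-node cache-size-$a$ paging solutions via the temporary-insert-and-restore (``forcing'') device. The only cosmetic differences are that you invoke laziness to bound evictions where the paper instead observes that removals are bounded by insertions, and you make explicit the (correct) point that each per-node paging instance is a fixed function of the adversarial input.
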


We note that the cost model in the paging problem as defined in many papers (see, e.g.,
\cite{SleTar85,FKLMSY91,Young91}) differs slightly from ours in two aspects:
\begin{itemize}
\item In the paging problem, whenever a page is requested, it must be fetched 
to the cache if it is not yet there (bypassing is not allowed). 
In contrast, in the $(b,a)$-matching problem an algorithm does 
not have to include the requested node pair in the matching. 
\item In the paging problem, an algorithm pays only for including page in the cache (there is no cost for 
eviction as in our model). Because bypassing is not allowed, the paging problem 
does not include the cost of serving a request either. 
\end{itemize}
We will handle these differences in our proof.

\paragraph{Algorithm definition.} 

Our algorithm $\AS$ for the $(b,a)$-matching problem runs separate $(b,a)$-paging
algorithms for each node; initially the caches corresponding to all vertices
are empty. $\AS$ dynamically creates input sequences for the paging algorithms
running at particular vertices. 
An~input sequence for the algorithm running at node~$w$ is
a sequence of node pairs having~$w$ as one of their endpoints.
At all times, the paging algorithm keeps a subset of at most~$b$ such
node pairs in its cache. 
On the basis of the (possibly random) decisions at each
node, \AS constructs its own solution, choosing which node pairs are kept as
matching edges in $\M$.

More precisely, whenever \AS handles a request $e = (u,v)$, it passes query~$e$ 
to the paging algorithms running at its endpoints: separately to node~$u$ and to 
node~$v$. By the
definition of the paging problem, both these vertices may reorganize their
caches, removing an arbitrary number of elements (node pairs) from their caches,  and
afterwards they need to fetch~$e$ to their caches (if it is not already there).

On this basis, \AS reorganizes the matching maintaining the following invariant:

\begin{quote}
\emph{Any node pair~$q$ is kept in the matching if and only if~$q$ is in the caches of both its endpoints.}
\end{quote}

Therefore, when~$e = (u,v)$ is requested, the actions of \AS are limited to the following: 
(i)~some node pairs having one of~$u$ or~$v$ as endpoints may be removed from~$M$, 
(ii)~$e$ becomes a matching edge in~$M$ 
(if it was not already there).\footnote{Note that there are cases 
where algorithm \AS has less matching edges than allowed by the threshold~$b$.
While this does not hinder theoretical analysis, it is worth noting 
that having an edge in the matching can only help us. Thus, 
in our experiments the removals are lazy, i.e., 
an edge is marked for removal and then some marked edges are pruned whenever
their number incident to a node exceeds~$b$.}
We note that our reduction itself is purely deterministic, but results in a randomized algorithm
if we use randomized algorithms for solving paging sub-problems at vertices.

\paragraph{Bounding competitive ratio.} 

Now we proceed with showing the desired competitive ratio of $\AS$. 

\begin{proof}[Proof of \autoref{thm:paging_reduction}]

Fix any input instance~$I$ for the $(b,a)$-matching problem. 
It induces~$|V|$ paging instances for each node: we denote the instance at node~$v$ 
by~$I_v$. Let~$A_v$ be the instance of online paging algorithm 
run at~$v$. By the theorem assumptions, for any node~$v$ it holds that 
\begin{equation}
\label{eq:paging-is-competitive}
    A_v(I_v) \leq f(b,a) \cdot \OPT(I_v) + \beta_{b,a}
\end{equation}
where $\beta_{b,a}$ is a constant independent of the input sequence and $\OPT(I_v)$ denotes 
the optimal 										$a$-paging solution for input~$I_v$.

We will show the following relations.
\begin{itemize}
    \item $\AS(I) \leq 4 \cdot \sum_{v \in V} A_v(I_v)$.
    \item $\sum_{v \in V} \OPT(I_v) \leq 14 \cdot \OPT(I)$
\end{itemize}
We note that we do not aim at optimizing the constants here, but rather 
at the simplicity of the description.
Clearly, combining these two inequalities 
with \eqref{eq:paging-is-competitive} immediately yields the theorem.

We start with proving the second inequality.
On the basis of the optimal solution for the $a$-matching problem $\OPT(I)$,
we first create an offline solution $\OFF(I)$ for the same problem, but having the 
property that right after a node pair is requested it is included in the matching.
We call this \emph{forcing property}.
This can be achieved by changing $\OPT$ decisions in the following way:
whenever we have a request~$e = (u,v)$ such that~$e$ is not in the matching 
neither directly before or after serving the request, we modify the solution 
in the following way: after executing $\OPT$ reorganizations (if any), we include~$e$ in the matching. If such a modification causes the degree of matching at~$u$ to exceed~$b$, we evict an arbitrary edge $e_u \neq e$ incident 
to~$u$. We perform an analogous action for~$v$, evicting edge~$e_v$ if necessary. 
Afterwards, right after serving the next request, but before implementing 
reorganizations of \OPT, we revert these changes: remove~$e$ from~$M$ 
and include~$e_u$ and~$e_v$ back into~$M$.
Note that $\OFF(I) \leq 7 \cdot \OPT(I)$ as for a request to~$e$ for which 
$\OPT$ paid~$1$, the solution of $\OFF$ adds at most $3+3$ updates of the matching~$M$
at the total cost of $6 \cdot \alpha = 6$.

Now, we observe that for any node~$v$, the solution $\OFF(I)$ naturally induces
feasible solutions $\OFF_v(I_v)$ for paging problem inputs~$I_v$: these solutions
simply repeat all actions of \OFF pertaining to edges whose one endpoint is equal 
to~$v$. By the forcing property of $\OFF$, the solutions 
$\OFF_v$ are feasible: upon request to node pair $e = (u,v)$, it is 
fetched to the cache of~$v$. 
Furthermore, $\sum_{v \in V} \OFF_v(I_v) \leq 2 \cdot \OFF(I)$,
as inclusion of $e = (u,v)$ to~$M$ in the solution of $\OFF$ leads to two fetches 
in the solutions of $\OFF_u$ and $\OFF_v$.
(A more careful analysis including the evictions costs that are not present in 
the paging problem would show that $\sum_{v \in V} \OFF_v(I_v)$ 
and $\OFF(I)$ can differ at most by an additive constant independent of the input).

Finally, we observe that an optimal solution for~$I_v$ can only be cheaper 
than $\OFF(I)$, and therefore $\sum_{v \in V} \OPT(I_v)
\leq \sum_{v \in V} \OFF(I_v) \leq 2 \cdot \OFF(I) \leq 2 \cdot 7 \cdot \OPT(I)$.

We now proceed to prove the first inequality.
When a requested node pair $e = (u,v)$ is in matching~$M$, then \AS does nothing 
and no cost is incurred on $\AS$ or on any of algorithms~$A_v$; hence,
we may assume that~$e \notin M$.
In such a case, 
either~$e$ is not in the cache of~$u$, or~$e$ is not in the cache of~$v$,
or~$e$ is in neither of these two caches. That is, either~$u$ or~$v$, or both 
must fetch~$e$ to their caches, which causes $A_u(I_u) + A_v(I_v)$ to grow by 
$1$ or $2$. Moreover, paging algorithms running at~$u$ and~$v$ may evict 
some number of node pairs from their caches (these actions are free in the paging
problem). In effect, \AS pays for the request~$e$ (at cost~1), 
includes~$e$ in the matching~$M$ (at cost~$1$) and 
removes some number of edges from~$M$.

Let $\AS^+(I)$ be the cost of \AS on~$I$ neglecting the cost of removals
of edges from~$M$. The analysis above showed that increases of $\AS^+(I)$ by 
$2$ can be charged to the increases of $\sum_{v \in V} A_v(I_v)$ by at least $1$,
and thus $\AS^+(I) \leq 2 \cdot \sum_{v \in V} A_v(I_v)$. 
The proof of the first inequality follows by noting that 
the total number of removals from~$M$ is at most the total number of inclusions 
to~$M$, and thus 
$\AS(I) \leq 2 \cdot \AS^+(I) \leq 4 \cdot \sum_{v \in V} A_v(I_v)$. 
\end{proof}

\subsection{Competitive ratio}

\begin{corollary}
There exists an $O(\gamma \cdot \log (b/(b-a+1))$-competitive randomized algorithm R-BMA for 
the $(b,a)$-matching problem, for an arbitrary $\alpha \geq 1$ and arbitrary path lengths~$\ell_e$.
\end{corollary}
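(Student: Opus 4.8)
The plan is to obtain R-BMA by pure composition of the two reductions proved above with an off-the-shelf randomized paging algorithm, so that the corollary requires essentially no new work.

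First I would fix the paging ingredient. The classical analysis of randomized paging under resource augmentation~\cite{Young91,Young94} shows that the randomized marking algorithm, run with a cache of size $b$ against an optimum restricted to a cache of size $a \le b$, is $O(\log(b/(b-a+1)))$-competitive for the $(b,a)$-paging problem. Hence in \autoref{thm:paging_reduction} we may take $f(b,a) = O(\log(b/(b-a+1)))$.

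Next, applying \autoref{thm:paging_reduction} to this algorithm yields a randomized, $O(f(b,a)) = O(\log(b/(b-a+1)))$-competitive algorithm $\AS$ for the uniform variant of the $(b,a)$-matching problem (where $\ell_e = 1$ for every $e \in \VV$ and $\alpha = 1$). Writing $c = O(\log(b/(b-a+1)))$ for its competitive ratio and feeding $\AS$ into \autoref{thm:reduction}, I would obtain a randomized $4\gamma \cdot c$-competitive algorithm for the $(b,a)$-matching problem with arbitrary $\alpha \ge 1$ and arbitrary path lengths $\ell_e$; this algorithm is R-BMA. Since $4\gamma \cdot c = O(\gamma \cdot \log(b/(b-a+1)))$, the stated bound follows, and the algorithm is randomized precisely because the per-node paging subroutines are, both reductions being deterministic wrappers around them.

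I do not expect a genuine obstacle here: the entire argument is the two-step chaining of the guarantees of \autoref{thm:paging_reduction} and \autoref{thm:reduction}. The two points worth a sentence are that the cited paging algorithm uses a slightly different cost model (fetch cost only, and no bypassing), which is precisely the gap already reconciled inside the proof of \autoref{thm:paging_reduction}, and that one must invoke the \emph{resource-augmented} paging bound in the $b/(b-a+1)$ form — the ordinary $O(\log b)$ marking bound would recover only the $a=b$ special case — which is exactly what makes the corollary asymptotically match the $\Omega(\log(b/(b-a+1)))$ lower bound announced in the introduction.
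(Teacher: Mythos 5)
Your proposal is correct and is essentially identical to the paper's own proof: it invokes the $O(\log(b/(b-a+1)))$-competitive randomized resource-augmented paging algorithm of Young, feeds it into \autoref{thm:paging_reduction} to handle the uniform case, and then composes with \autoref{thm:reduction} to absorb the factor $4\gamma$. The two caveats you flag (the cost-model discrepancy and the need for the augmented $b/(b-a+1)$ bound rather than the plain $\log b$ bound) are exactly the points the paper relies on as well.
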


\begin{proof}
Applying a $2 \cdot \ln (b/(b-a+1))$-competitive randomized algorithm for the paging 
problem~\cite{Young91} (better constant factors were achieved 
when $a = b$~\cite{FKLMSY91,McGSle91,AcChNo00}) to 
\autoref{thm:paging_reduction} 
yields an $O(\log (b/(b-a+1))$-competitive 
randomized algorithm for the uniform variant of the $(b,a)$-matching problem.

Next, \autoref{thm:reduction} shows how to create an 
$O(\gamma \cdot \log (b/(b-a+1))$-competitive randomized algorithm R-BMA for 
arbitrary $\alpha \geq 1$ and arbitrary path lengths $\ell_e$.
\end{proof}

\subsection{Lower bound}

We now show that our algorithm is asymptotically optimal by proving that the 
$(b,a)$-matching problem contains the $(b,a)$-paging problem with bypassing as a special case.
In the bypassing variant of the $(b,a)$-paging problem, an algorithm does not have to 
fetch the requested item to the cache. 

\begin{lemma}
\label{lem:lower_bound}
Assume that there exists a (deterministic or randomized) 
$f(b,a)$-competitive algorithm for the $(b,a)$-matching problem for 
some function $f: \mathbb{N}_{\geq 0} \times \mathbb{N}_{\geq 0} \to \mathbb{R}_\geq 0$.
Then, there exists an (also deterministic or randomized) $4 \cdot f(b,a)$-competitive algorithm 
for the $(b,a)$-paging with bypassing.
\end{lemma}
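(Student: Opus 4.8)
The plan is to reduce $(b,a)$-paging with bypassing to $(b,a)$-matching by simulating the cache at a single distinguished node. Given an instance $J$ of the paging problem with pages $p_1, p_2, \ldots$, I would fix one special node $r$ and, for each page $p$, a distinct "page node" $u_p$; the matching edge $\{r, u_p\}$ will represent "page $p$ is in the cache." A request to page $p$ in $J$ is translated to a request to the node pair $\{r, u_p\}$ in the matching instance $I$. Since only edges incident to $r$ are ever requested, and $r$ has matching degree at most $b$ (resp.\ $a$ for \OPT), the set of such edges behaves exactly like a cache of size $b$ (resp.\ $a$). Running the assumed $f(b,a)$-competitive matching algorithm on $I$ and reporting "$p$ is cached iff $\{r,u_p\} \in M$" yields a paging algorithm with bypassing: the matching algorithm is free to not include $\{r,u_p\}$ after a request, which corresponds precisely to bypassing.

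The remaining work is to relate costs. In the uniform-style paging cost model used here, each fetch costs $1$ (and, to match the matching problem, I should also either charge evictions or absorb them); each request served from cache is free, and a bypassed request costs $1$ (the cost of \emph{not} having it, which in the matching instance is $\ell_e = 1$ since $\{r,u_p\} \notin M$ means the request is routed on the fixed network at cost $\ell_{\{r,u_p\}}$). So I need the fixed network $G=(V,F)$ to be such that $\ell_{\{r,u_p\}} = 1$ for every page node — easiest is to let $F$ itself contain all edges $\{r, u_p\}$, i.e., $G$ is a star centered at $r$, so every relevant shortest-path length is exactly $1$, and $\alpha = 1$. Then a request in $I$ costs the matching algorithm $1$ if $\{r,u_p\} \notin M$ at request time (a "bypass"), $0$ if it is in $M$; reconfiguration costs $1$ per edge added or removed. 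Translating back: each inclusion of $\{r,u_p\}$ into $M$ is a fetch of $p$; each removal is an eviction. The matching cost is thus within a factor $2$ of the natural paging-with-bypassing cost (fetches $+$ bypasses), the factor $2$ coming from counting evictions, which are at most the number of inclusions. Symmetrically, \OPT for the matching instance restricted to edges at $r$ is exactly an optimal $a$-paging-with-bypassing solution for $J$, again up to a factor $2$ for evictions. Combining, I get a $4 \cdot f(b,a)$-competitive algorithm for $(b,a)$-paging with bypassing, with the additive constant carried through unchanged.

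**The main obstacle** I expect is bookkeeping around the two cost-model mismatches already flagged in the paper: the matching problem charges for evictions and for serving requests, whereas the standard paging model does not. The clean way is to define paging-with-bypassing cost as (number of fetches) $+$ (number of bypassed requests) and then show each side of the competitive inequality loses at most a factor $2$ when passing to matching cost — the "fetches $\le$ inclusions, evictions $\le$ inclusions" argument is exactly the $\AS(I) \le 2\,\AS^+(I)$ trick used in the proof of \autoref{thm:paging_reduction}, so it should transfer almost verbatim. A minor subtlety is ensuring the page nodes $u_p$ are genuinely distinct and that no node other than $r$ ever needs degree more than $1$, so the degree-$b$ constraint at the $u_p$'s is never binding and the only active constraint is the cache-size constraint at $r$; with the star construction this is immediate. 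I would also note explicitly that the reduction preserves determinism vs.\ randomness, since it is a deterministic simulation, just as in the earlier reductions.
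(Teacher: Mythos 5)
Your reduction is essentially the paper's: a star centered at a distinguished node, with the matching edges incident to the center simulating the cache, bypassing corresponding to not including the requested edge, and a factor of $4$ absorbed by the two cost-model mismatches (serving/eviction costs on the algorithm side and on the \OPT side). The one substantive difference is your treatment of $\alpha$: you fix $\alpha = 1$ in the constructed matching instance, so your argument only converts an algorithm that is competitive on \emph{uniform} ($\alpha=1$) instances into a paging algorithm. The paper instead keeps $\alpha$ arbitrary and translates each paging request into a \emph{block of $\alpha$ consecutive requests} to $\{v_0,v_i\}$; this makes the per-request routing cost of a ``bypassed'' block ($\alpha \cdot 1$) and the reconfiguration cost ($\alpha$) scale together, so the matching and paging costs differ by a clean factor of $\alpha$ and the reduction — and hence the $\Omega(\log(b/(b-a+1)))$ lower bound in the following theorem — holds for every $\alpha$, not just $\alpha=1$. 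If you only need the lemma for the uniform case your proof is fine; to support the paper's claim ``the results hold for an arbitrary $\alpha$'' you should add the block-of-$\alpha$-requests device (the rest of your accounting, including the eviction-vs-inclusion factor of $2$ on each side, carries over unchanged).
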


\begin{proof}
Let $\ALGB$ be an algorithm for the $(b,a)$-matching problem. 
Assume that an algorithm for $(b,a)$-paging with bypassing has to operate in a universe of $n > b$ items. 
To construct an algorithm for the paging problem, we thus create a star graph of~$n+1$ nodes $\{v_0, v_1, \dots, v_n\}$ 
and set of~$n$ non-configurable links~$F$ connecting~$v_0$ with all remaining nodes, each of length~$1$. 
Nodes~$v_1, \dots, v_n$ correspond to the universe of~$n$ items.  For any paging request to an item~$v_i$, 
$\ALGP$ generates a block of $\alpha$ requests to node pair $\{v_0, v_i\}$. 
$\ALGP$ internally runs $\ALGB$ on the star graph 
and repeats its choices: $\ALGP$ always caches the items that are connected by the matching edges to~$v_0$ in
the solution of $\ALGB$.

For now, we ignore the costs of removing edges from the matching.
It is easy to observe that the cost of $\ALGP$ is at most $2 / \alpha$ times larger than that of $\ALGB$.
Furthermore, without loss of generality, for a block of requests to node pair $\{v_0, v_i\}$ 
an optimal algorithm for the $(b,a)$-matching either includes $\{v_0, v_i\}$ in the matching right before the block or
it does not change its matching at all. Thus the optimal solutions for the $(b,a)$-matching problem 
and $(b,a)$-paging problem coincide and their costs 
differ exactly by the factor of $1/\alpha$. Putting these bounds together,
we obtain that $\ALGP$ is $2 \cdot f(b,a)$-competitive for the $(b,a)$-paging problem with bypassing,
ignoring the costs of matching removals, and thus at most $4 \cdot f(b,a)$-competitive when matching removals
are taken into account. 
\end{proof}

As noted by Epstein et al.~\cite{epstein11} the bypassing 
variant is asymptotically equivalent to the non-bypassing one. Using the known lower bound for the $(b,a)$-paging 
problem~\cite{Young91}
along with \autoref{lem:lower_bound}, we immediately obtain the following corollary.

\begin{theorem}
The competitive ratio of any randomized algorithm for the $(b,a)$-matching problem 
is at least $\Omega(\log (b/(b-a+1)))$. The results hold for an arbitrary $\alpha$.
\end{theorem}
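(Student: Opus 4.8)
The plan is to derive the lower bound for the $(b,a)$-matching problem directly from the known lower bound for randomized $(b,a)$-paging, using the reduction already encapsulated in \autoref{lem:lower_bound}. First I would recall that Young~\cite{Young91} established that any randomized algorithm for the $(b,a)$-paging problem (without bypassing) has competitive ratio $\Omega(\log(b/(b-a+1)))$; by the observation of Epstein et al.~\cite{epstein11}, the same asymptotic lower bound holds for the bypassing variant, since the two variants are asymptotically equivalent in competitive ratio. So it suffices to transfer a lower bound for $(b,a)$-paging with bypassing into one for $(b,a)$-matching.

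The transfer is exactly the contrapositive of \autoref{lem:lower_bound}. Suppose, for contradiction, that some randomized algorithm for the $(b,a)$-matching problem achieves competitive ratio $f(b,a) = o(\log(b/(b-a+1)))$. By \autoref{lem:lower_bound}, this yields a randomized $4 \cdot f(b,a)$-competitive algorithm for $(b,a)$-paging with bypassing. But $4 \cdot f(b,a)$ is still $o(\log(b/(b-a+1)))$, contradicting the paging lower bound (as extended to the bypassing case via~\cite{epstein11}). Hence no such matching algorithm exists, and the competitive ratio of any randomized algorithm for the $(b,a)$-matching problem is $\Omega(\log(b/(b-a+1)))$. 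Since the construction in \autoref{lem:lower_bound} works for an arbitrary fixed $\alpha$ (the star gadget simply issues blocks of $\alpha$ identical requests regardless of the value of $\alpha$), the bound holds for all $\alpha$, as claimed.

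There is essentially no remaining obstacle here: all the real work has already been done in \autoref{lem:lower_bound}, whose star-graph gadget embeds paging-with-bypassing instances into matching instances with only a constant-factor loss. The only point requiring a small amount of care is invoking the equivalence of the bypassing and non-bypassing paging variants from~\cite{epstein11}, so that the classical non-bypassing lower bound of~\cite{Young91} applies; I would state this explicitly rather than leave it implicit. One could alternatively cite a direct lower bound for bypassing paging if available, but routing through~\cite{epstein11} is the cleanest path given what has been set up. I expect the write-up to be short—a paragraph combining \autoref{lem:lower_bound}, the bypassing equivalence, and Young's bound—so the main ``difficulty'' is purely expository: making sure the asymptotic notation is handled correctly when chaining the constant-factor loss through the reduction.
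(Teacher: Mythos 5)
Your proposal matches the paper's own argument exactly: the paper likewise derives the theorem by combining the known randomized lower bound for $(b,a)$-paging from~\cite{Young91} with the asymptotic equivalence of the bypassing and non-bypassing variants noted by Epstein et al.~\cite{epstein11}, and then transferring it through \autoref{lem:lower_bound}. Your contrapositive phrasing and the remark about $\alpha$-independence are just slightly more explicit renderings of the same one-paragraph proof.
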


\begin{figure*}[!htb]
\centering
\begin{subfigure}[b]{0.325\textwidth}
		\centering
		\includegraphics[width=0.99\textwidth]{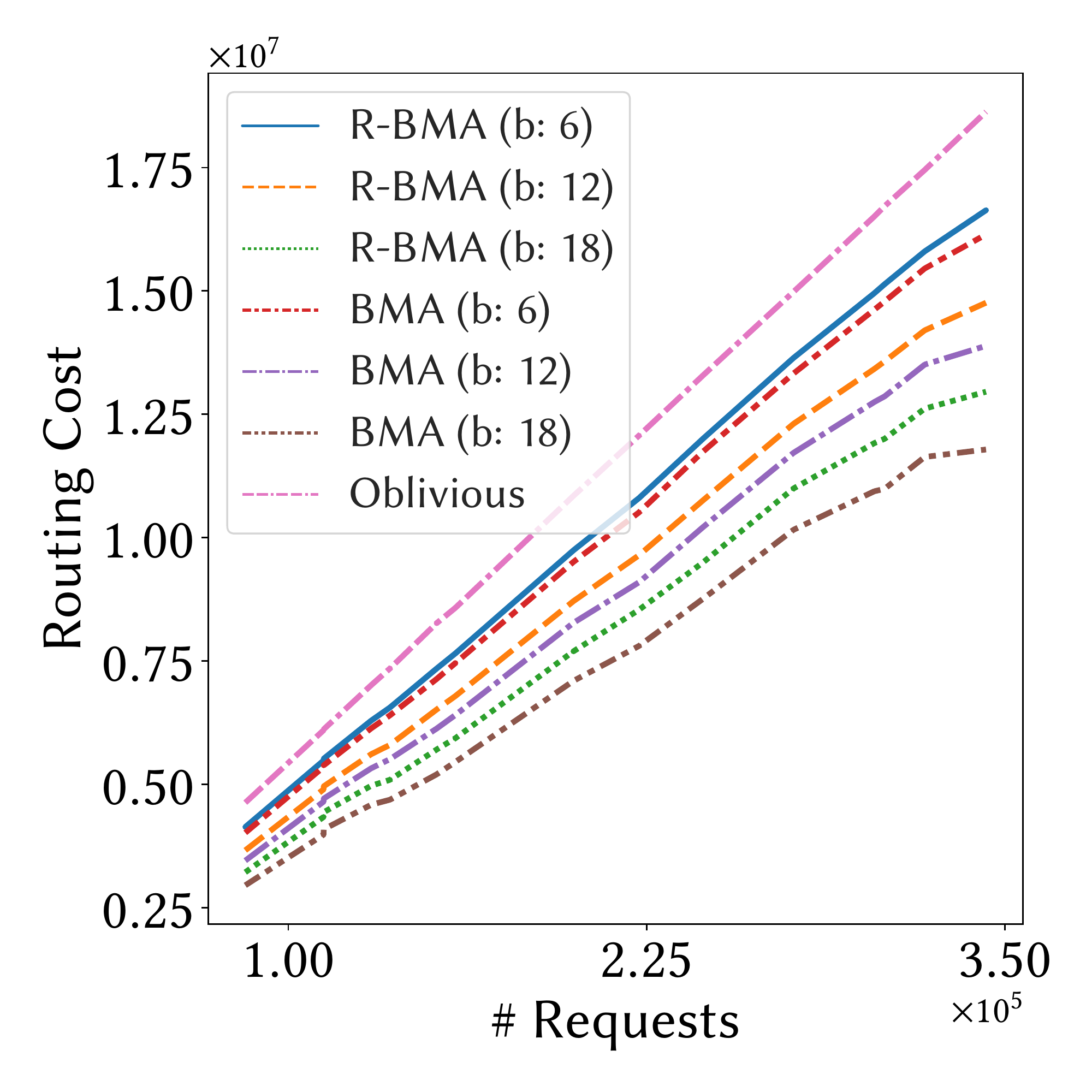}
		\caption{Routing costs.}
		\label{subfig:fb-database-results-obj}
\end{subfigure}
\hfill
\begin{subfigure}[b]{0.325\textwidth}
		\centering
		\includegraphics[width=0.99\textwidth]{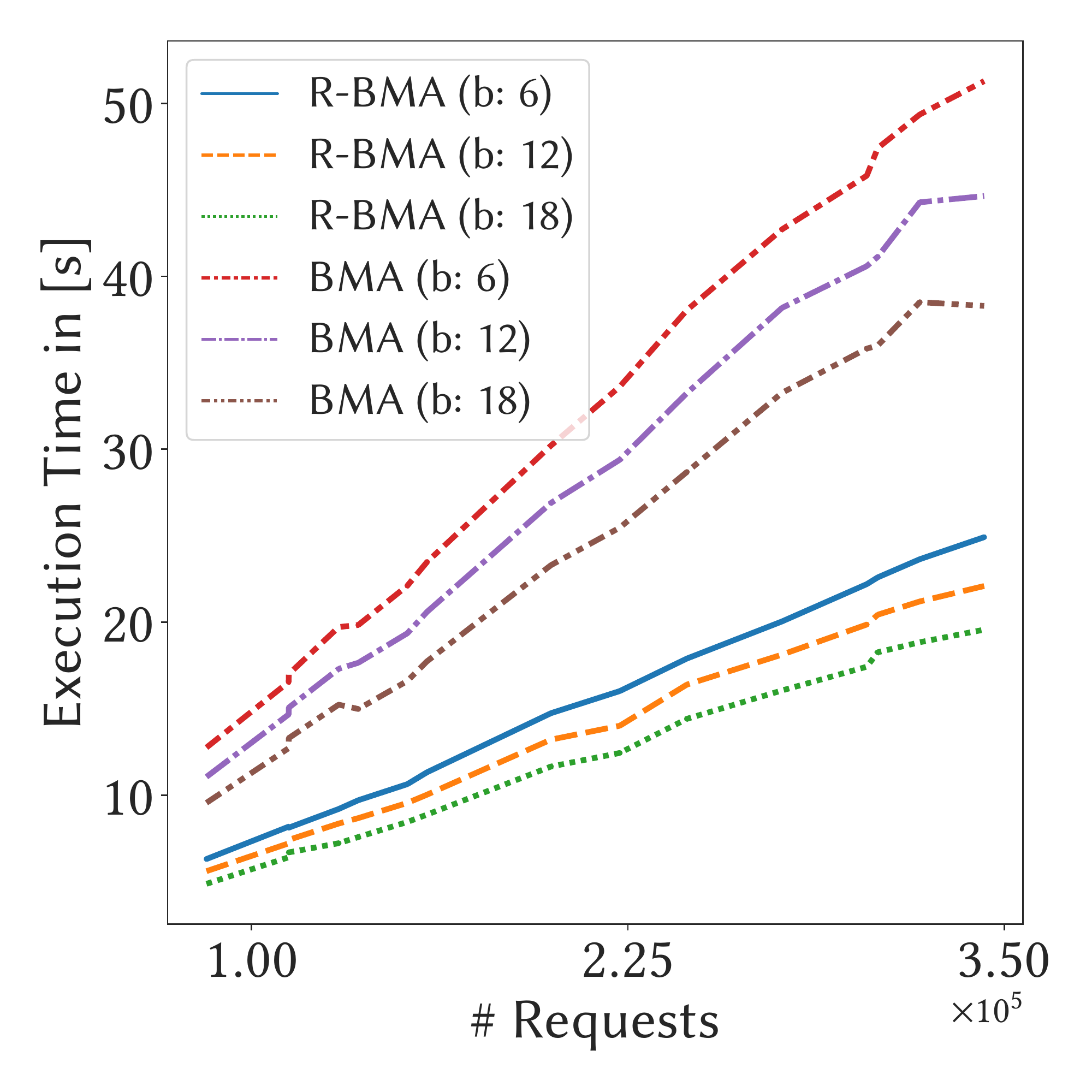}
		\caption{Execution time.}
		\label{subfig:fb-database-results-exe}
\end{subfigure}
\hfill
\begin{subfigure}[b]{0.325\textwidth}
		\centering
		\includegraphics[width=0.99\textwidth]{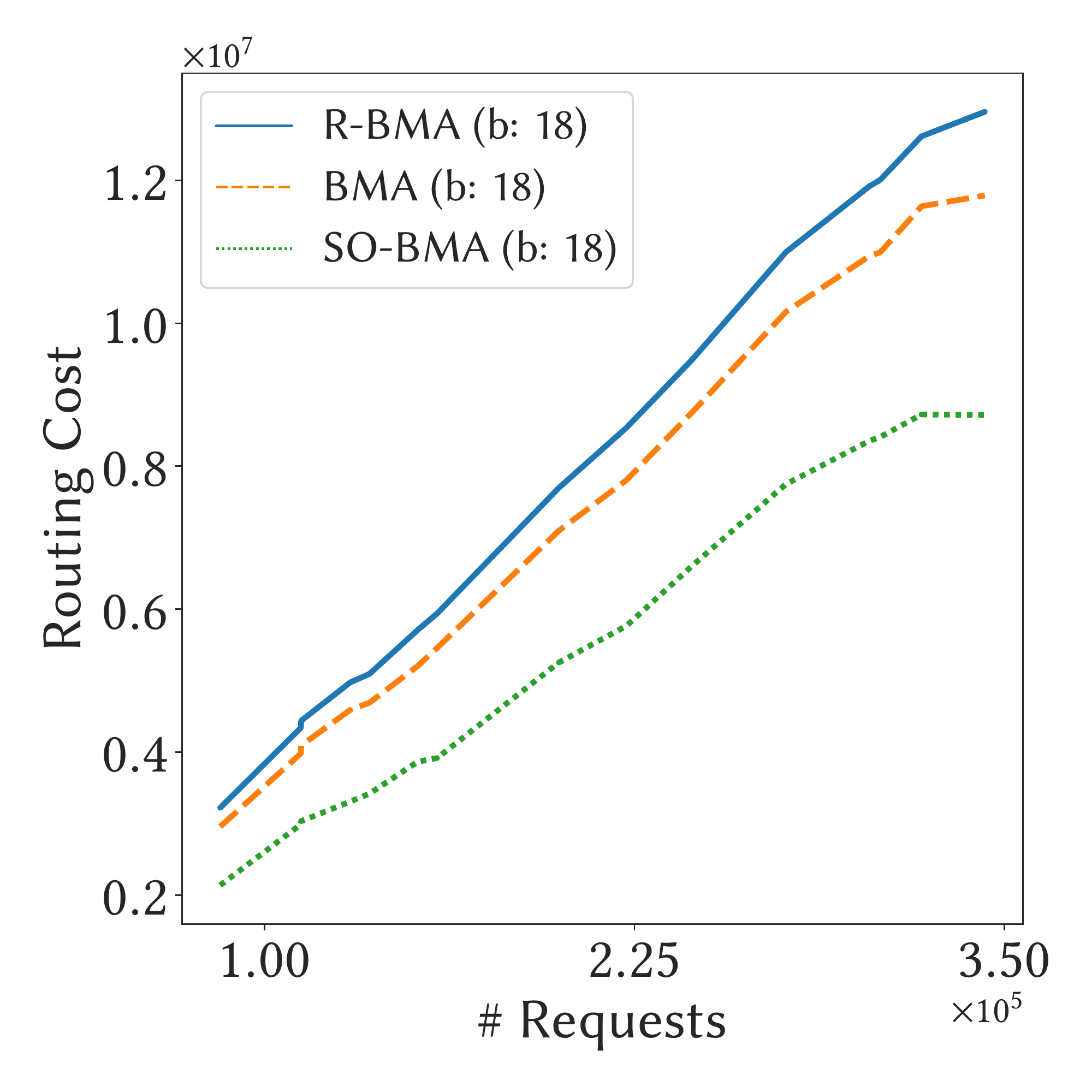}
		\caption{Best of comparison.}
		\label{subfig:fb-database-results-best}
\end{subfigure}
\caption{Facebook Database cluster.}
\label{fig:fb-database-results}
\end{figure*}

\begin{figure*}[!htb]
\centering
\begin{subfigure}[b]{0.325\textwidth}
		\centering
		\includegraphics[width=0.99\textwidth]{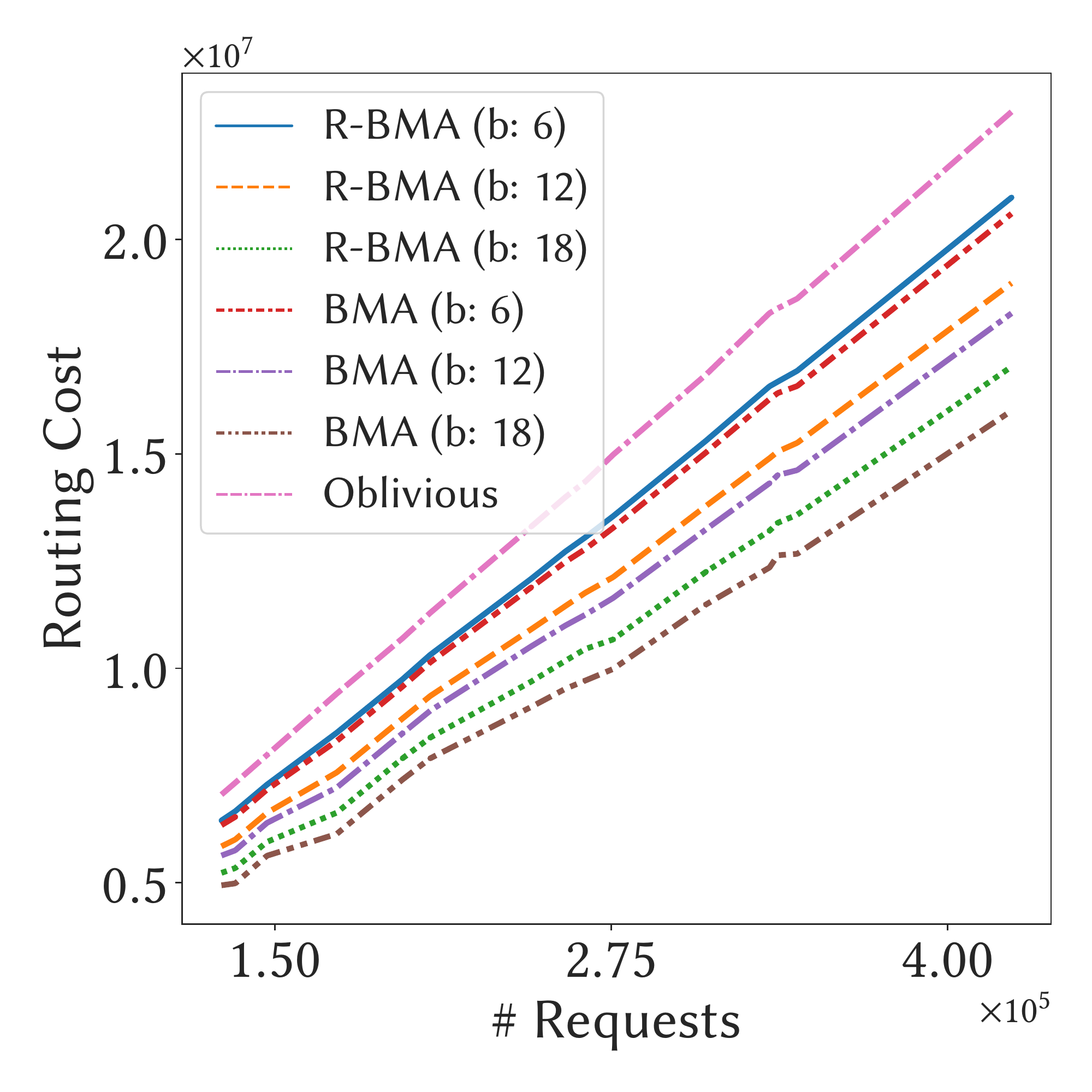}
		\caption{Routing costs.}
		\label{subfig:fb-webservice-results-obj}
\end{subfigure}
\hfill
\begin{subfigure}[b]{0.325\textwidth}
		\centering
		\includegraphics[width=0.99\textwidth]{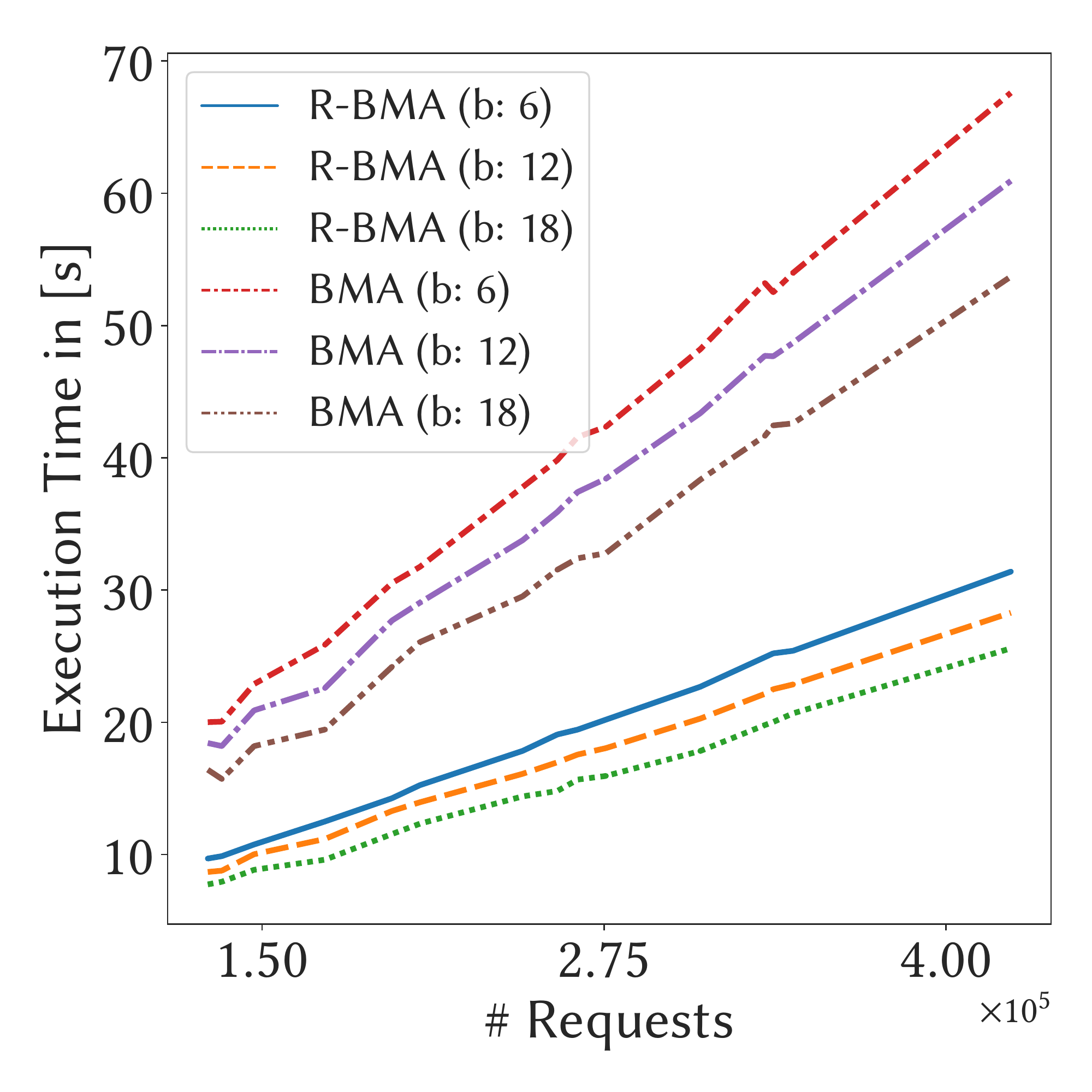}
		\caption{Execution time.}
		\label{subfig:fb-webservice-results-exe}
\end{subfigure}
\hfill
\begin{subfigure}[b]{0.325\textwidth}
		\centering
		\includegraphics[width=0.99\textwidth]{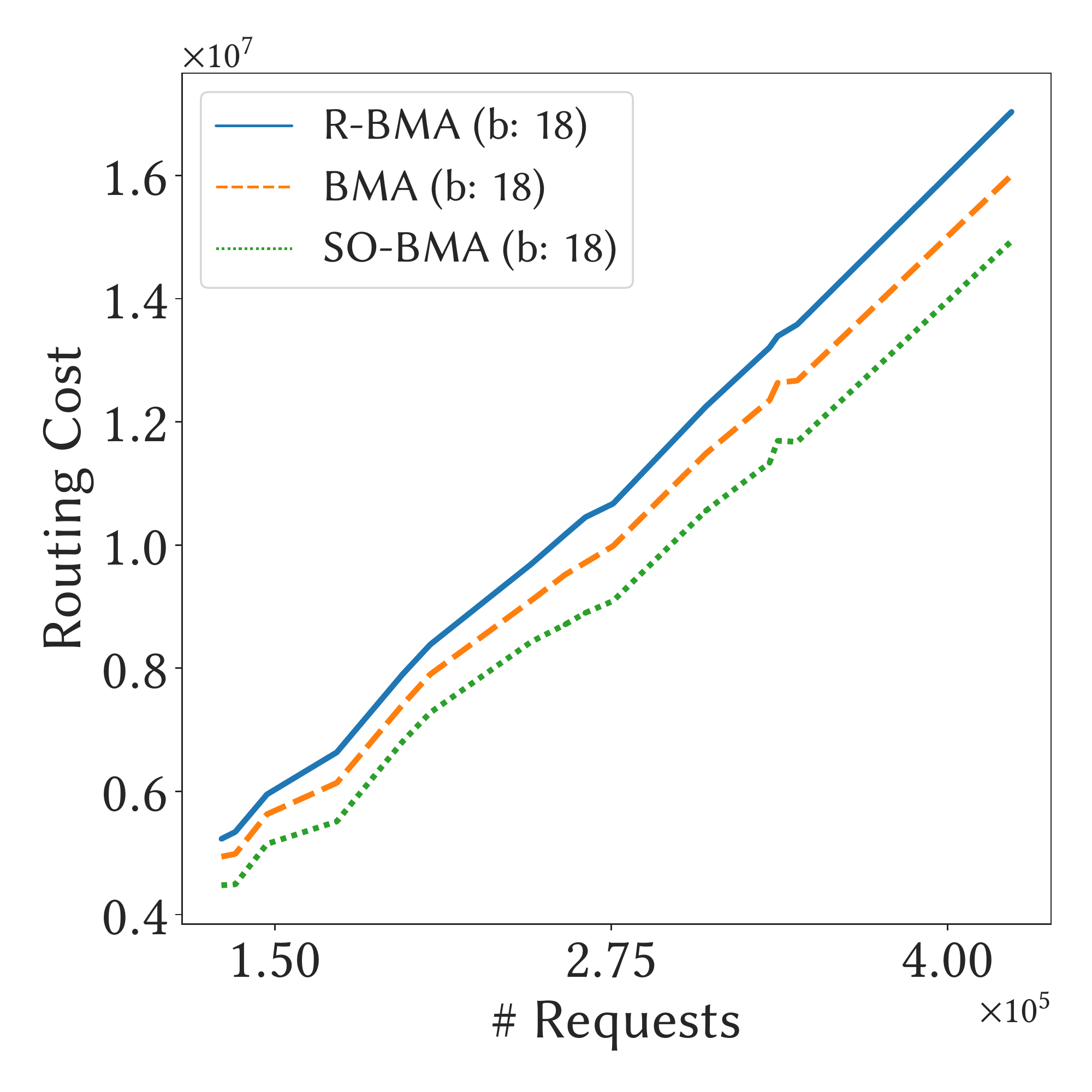}
		\caption{Best of comparison.}
		\label{subfig:fb-webservice-results-best}
\end{subfigure}
\caption{Facebook Web Service cluster.}
\label{fig:fb-webservice-results}
\end{figure*}

\begin{figure*}[!htb]
\centering
\begin{subfigure}[b]{0.325\textwidth}
		\centering
		\includegraphics[width=0.99\textwidth]{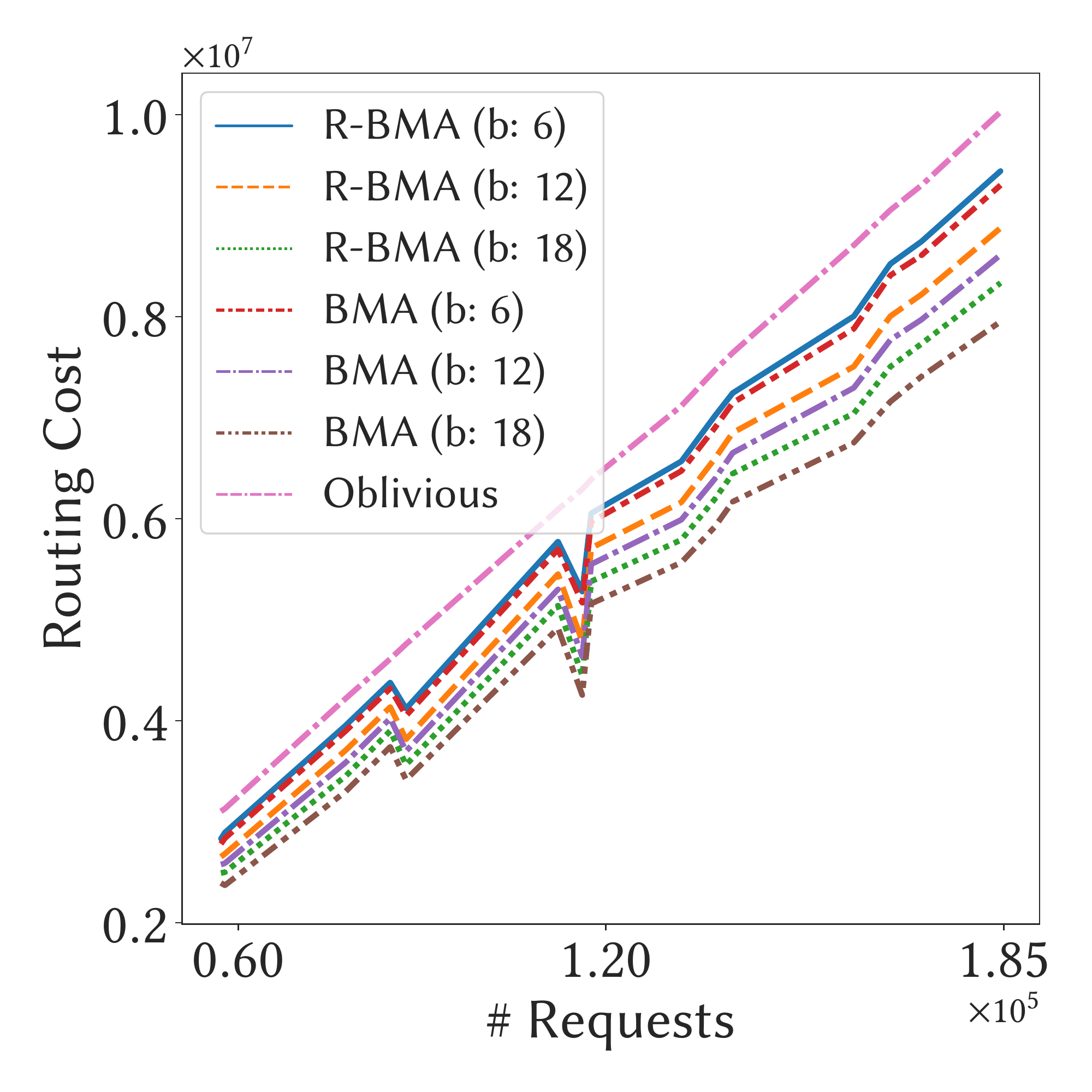}
		\caption{Routing costs.}
		\label{subfig:fb-hadoop-results-obj}
\end{subfigure}
\hfill
\begin{subfigure}[b]{0.325\textwidth}
		\centering
		\includegraphics[width=0.99\textwidth]{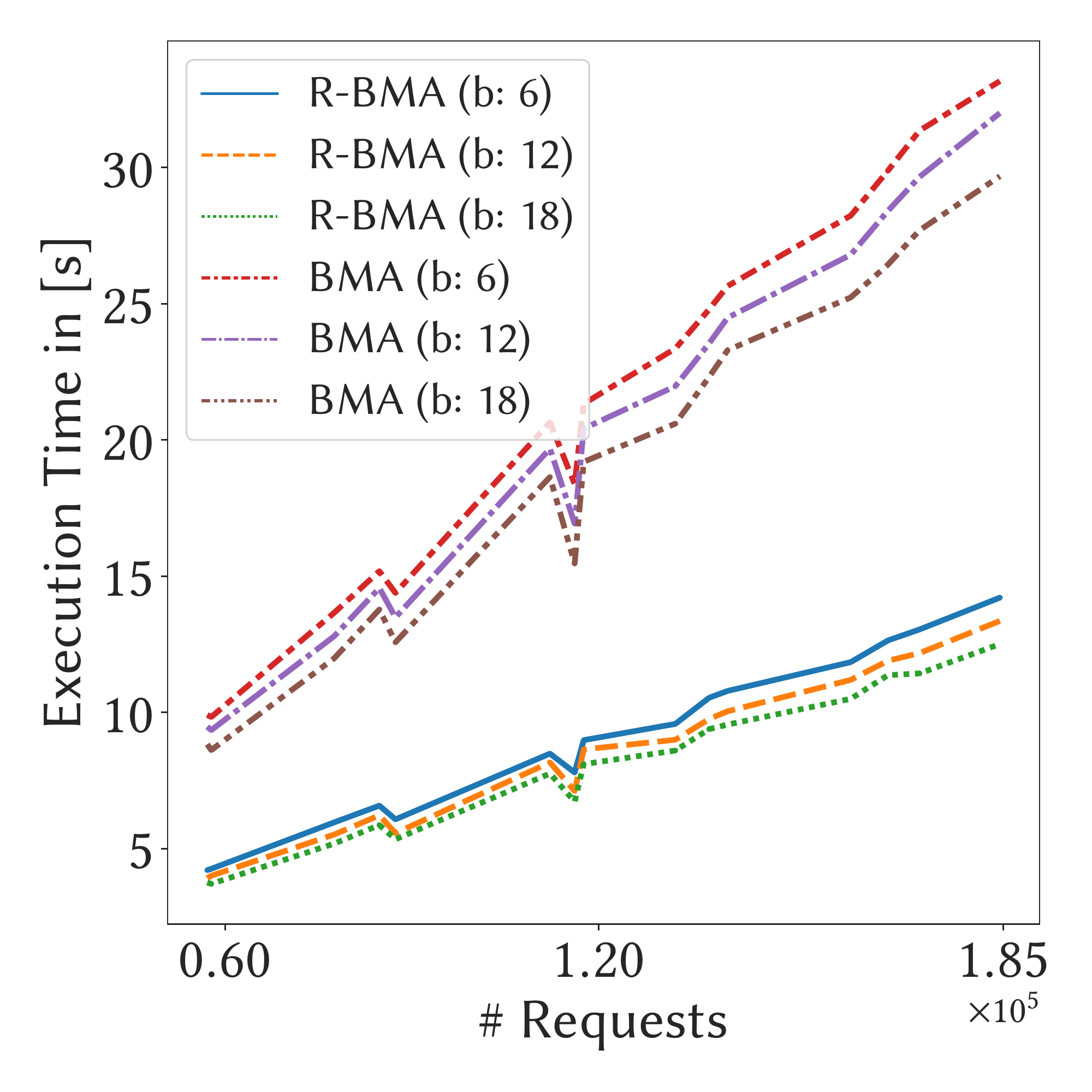}
		\caption{Execution time.}
		\label{subfig:fb-hadoop-results-exe}
\end{subfigure}
\hfill
\begin{subfigure}[b]{0.325\textwidth}
		\centering
		\includegraphics[width=0.99\textwidth]{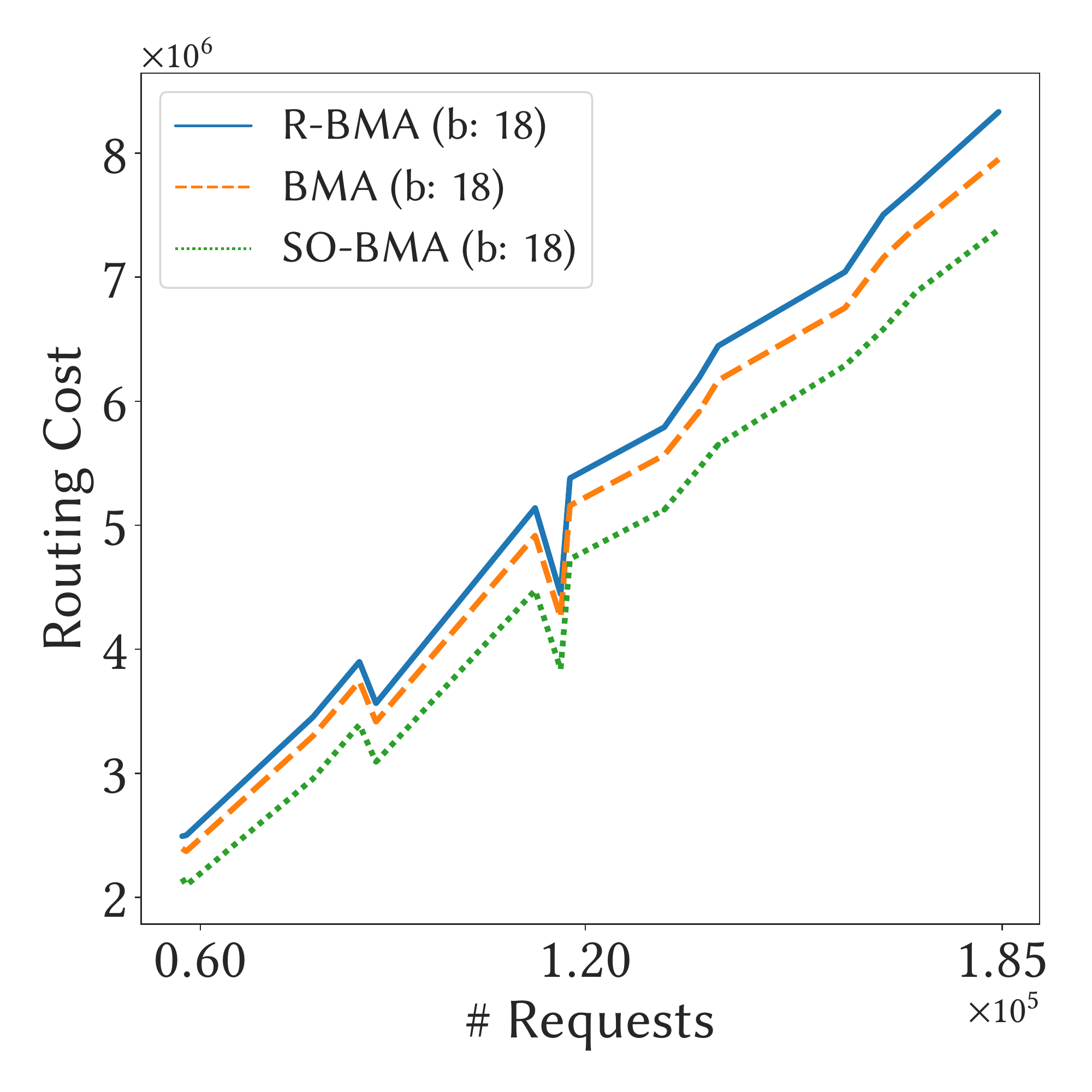}
		\caption{Best of comparison.}
		\label{subfig:fb-hadoop-results-best}
\end{subfigure}
\caption{Facebook Hadoop cluster.}
\label{fig:fb-hadoop-results}
\end{figure*}

\begin{figure*}[!htb]
\centering
\begin{subfigure}[b]{0.325\textwidth}
		\centering
		\includegraphics[width=0.99\textwidth]{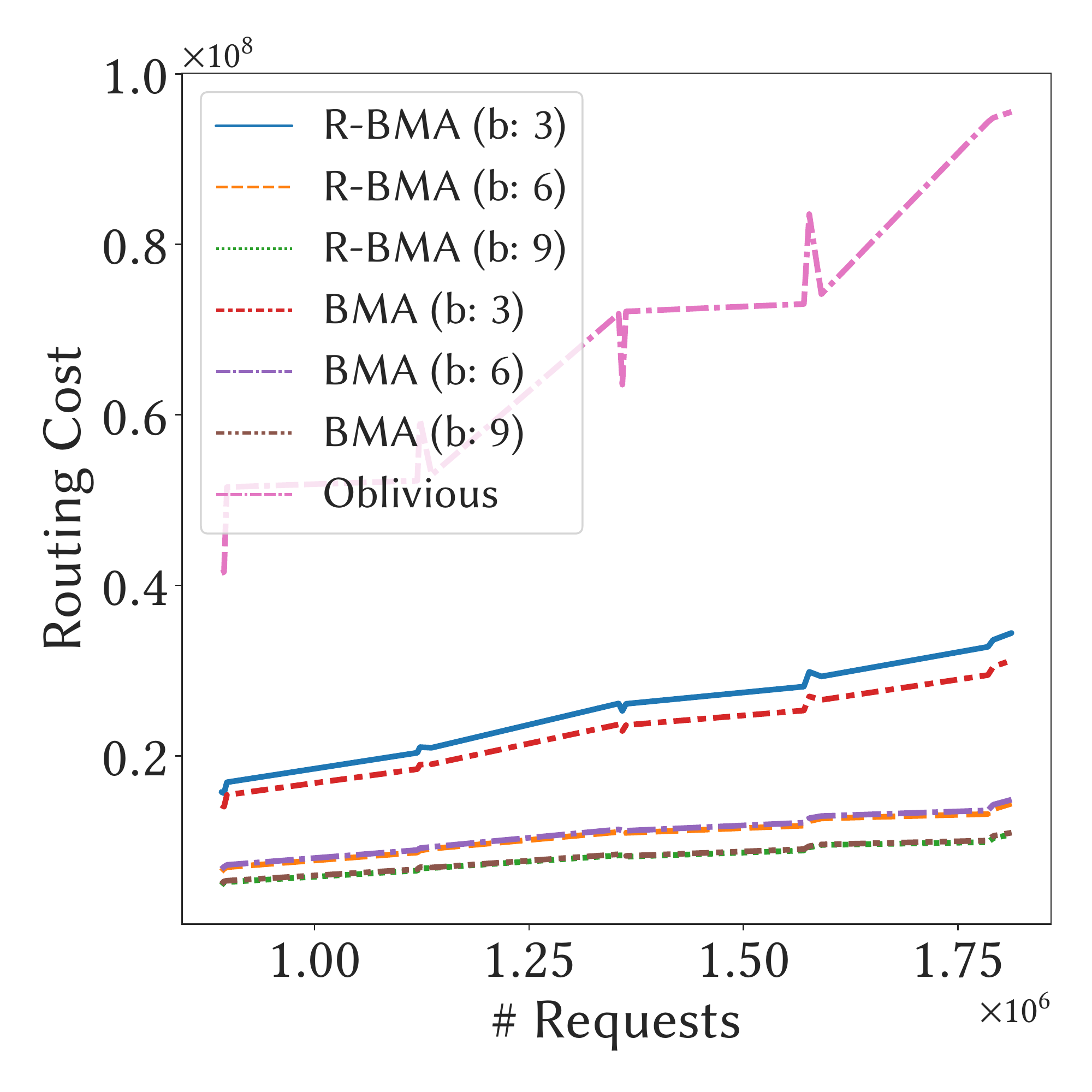}
		\caption{Routing costs.}
		\label{subfig:microsoft-results-obj}
\end{subfigure}
\hfill
\begin{subfigure}[b]{0.325\textwidth}
		\centering
		\includegraphics[width=0.99\textwidth]{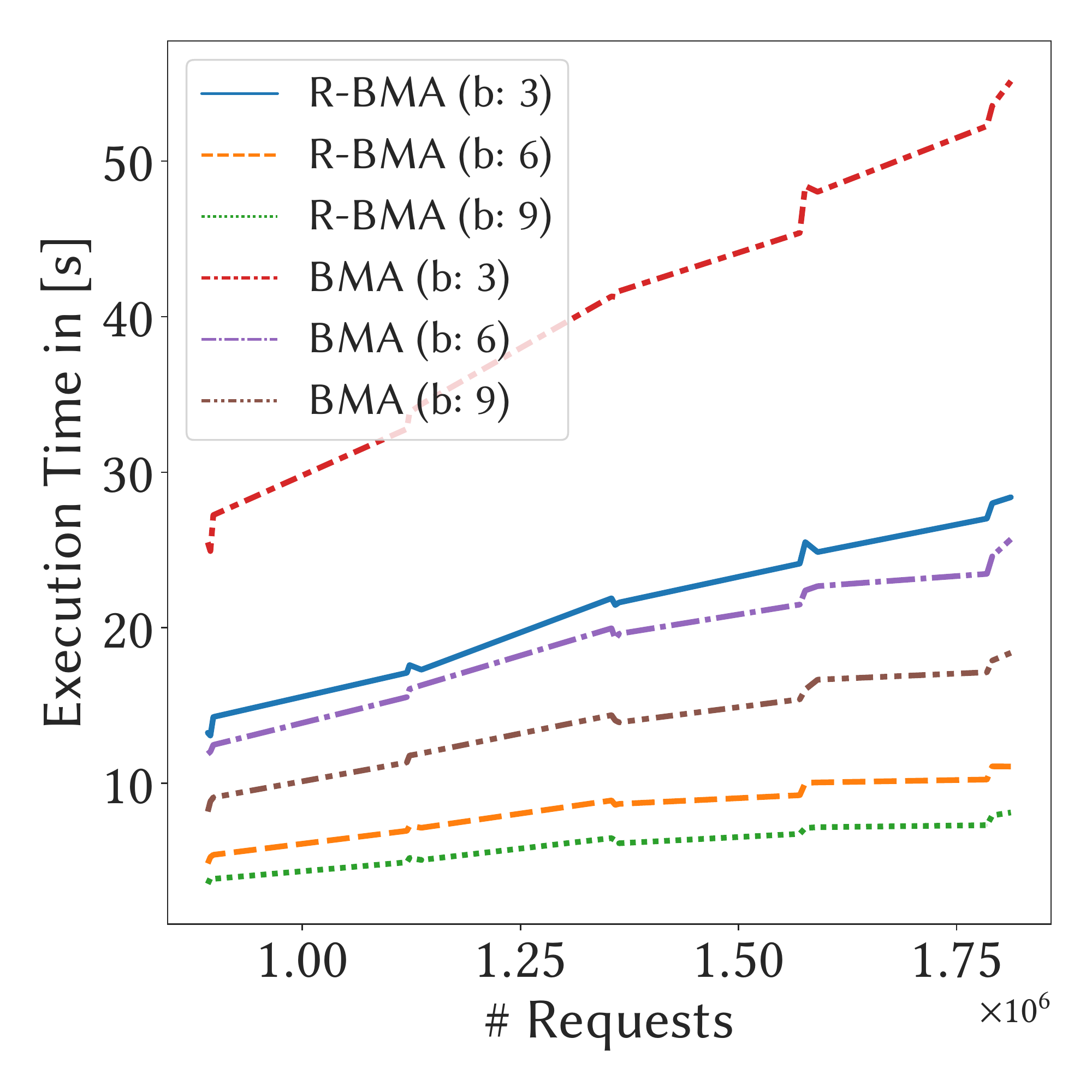}
		\caption{Execution time.}
		\label{subfig:microsoft-results-exe}
\end{subfigure}
\hfill
\begin{subfigure}[b]{0.325\textwidth}
		\centering
		\includegraphics[width=0.99\textwidth]{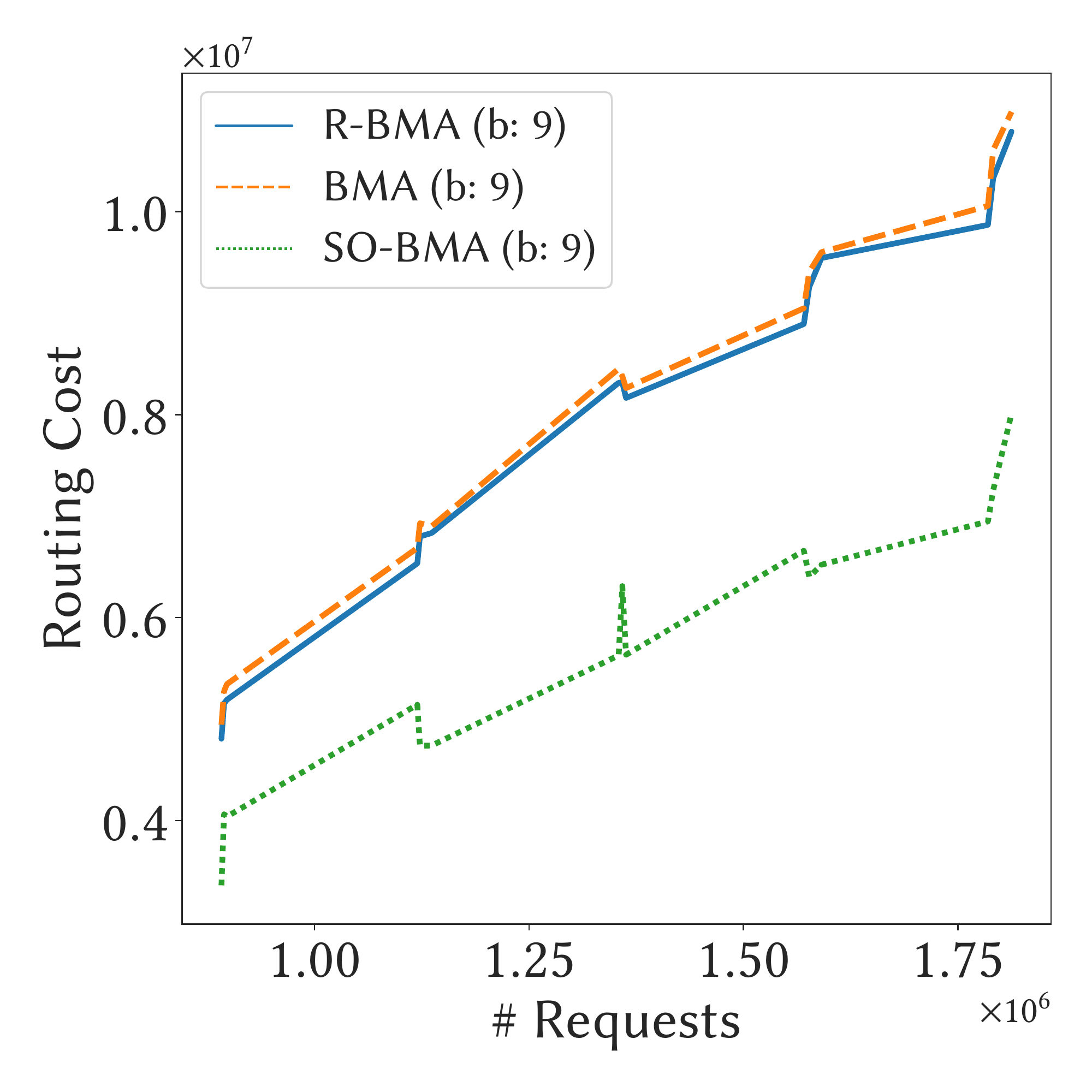}
		\caption{Best of comparison.}
		\label{subfig:microsoft-results-best}
\end{subfigure}
\caption{Microsoft cluster.}
\label{fig:microsoft-results}
\end{figure*}

\section{Empirical Evaluation}\label{sec:simulations}

To complement our formal evaluation, and in particular the worst-case analysis above, 
we also performed an empirical study of the performance of our algorithm under 
real-world workloads from various datacenter operators.
In particular, we benchmark our randomized algorithm against a state-of-the-art (deterministic) online b-matching algorithm~\cite{perf20bmatch} and against a maximum weight matching algorithm~\cite{maxmatching86}.

\subsection{Methodology}\label{sec:methodology}

\paragraph{Setup.} 

We implemented all algorithms in Python leveraging the
NetworkX library. 
For the implementation of the \emph{Maximum Weight Matching} algorithm we used the algorithm provided by NetworkX, which is based on Edmond's \emph{Blossom} algorithm~\cite{maxmatching86}.
Our experiments were run on a machine with two Intel Xeons
E5-2697V3 SR1XF with 2.6 GHz, 14 cores each and 128 GB
RAM. The host machine was running Ubuntu 20.04 LTS.

\paragraph{Simulation Workloads.}

Real-world datacenter traffic can vary significantly with respect to the spatial and temporal structure they feature, which depends on the application running \cite{sigmetrics20complexity}. 
Hence, our simulations are based on the following real-world datacenter traffic workloads from Facebook and Microsoft, which cover a spectrum of application domains.

\begin{itemize}
	\item \textbf{Facebook}~\cite{roy2015inside}:
    We use three different workloads, each from a different
    Facebook cluster. 
    We use a batch processing trace
	from one of Facebook's \emph{Hadoop} clusters, as well as traces
	from one of Facebook's database clusters, which serves SQL requests.
	Furthermore, we use traces from one of Facebook's \emph{Web-Service} cluster.   
	
	\item \textbf{Microsoft}~\cite{projector}:
	This data set is simply a probability
	distribution, describing the rack-to-rack
	communication (a traffic matrix).
	In order to generate a trace, we sample
	from this distribution \emph{i.i.d.}
	Hence, this trace does not
	contain any temporal structure by design
	(e.g., is not bursty)~\cite{sigmetrics20complexity}.
	However, it is known that it contains
	significant spatial structure (i.e.,
	is skewed). 
	
\end{itemize}

In all our simulations, we consider a typical fat-tree based datacenter topology, with~$100$ nodes in the case of the Facebook clusters, and with~$50$ nodes in the case of the Microsoft cluster.
Although our proposed algorithm can be used on any other static network, e.g., a~star topology, our experiments only consider the fat-tree topology because of its wide adoption across datacenters.

The main difference with respect to different network topologies is the cost associated with routing a request over the static network. 
Hence, network topologies with shorter paths between source and destination nodes would result in lower costs for all requests routed over the static network.

The cost of each request is calculated as the shortest path length between source and destination node.
Hence, if source and destination are connected by a reconfigurable link the cost equals~$1$.
Otherwise, the routing cost is computed as the number of hops from source to destination.
In general, each simulation is repeated~five times and then the results are averaged.
Each simulation is run sequentially. Hence, no parallelism is used during the execution of the proposed algorithm.

\subsection{Results and discussion}

We discuss the main results of our simulations based on the traces introduced above.
For each traffic trace we evaluate the routing cost and execution performance of our randomized $b$-matching algorithm. 
In particular, we evaluate the impact of different $b$ (henceforth called cache size due to how these links are managed in our algorithm) and compare our randomized algorithm (R-BMA) to the performance of BMA \cite{perf20bmatch} (BMA) and a \emph{Maximum Weight Matching} algorithm (SO-BMA). 

\paragraph{Routing Cost.} 

We first discuss the observed routing cost. 
Fig.~\ref{subfig:fb-database-results-obj} shows the results of the Facebook database cluster.
The violet line denotes the oblivious case, where each request is solely routed over the static network and no matching edges are present, e.g., a network without any reconfigurable switch.
The results show that R-BMA achieves a significant routing cost reduction of up to~$35\%$ with a cache size of~$18$.
In comparison to BMA, R-BMA performs almost identical to BMA on smaller cache sizes of~$6$.
Fig.~\ref{subfig:fb-database-results-best} shows that R-BMA routing cost reduction is not more than~$5\%$ higher on smaller numbers of requests, e.g., up to 200,000 requests.
Still, in comparison to the SO-BMA, the gap with respect to the routing cost reduction widens as the number of requests grows. 
In contrast to the results achieved on Facebook's database cluster, Figs.~\ref{subfig:fb-webservice-results-best} and~\ref{subfig:fb-hadoop-results-best} show that R-BMA  achieves similar routing cost reductions compared to BMA and SO-BMA.

Regarding the Microsoft traces, we can observe that R-BMA achieves a similar routing cost reduction compared to BMA. 
Furthermore, as the cache size grows, R-BMA achieves the same routing cost reduction as BMA, as shown in Fig.~\ref{subfig:microsoft-results-obj}.
Fig.~\ref{subfig:microsoft-results-best} shows that SO-BMA performs significantly better than R-BMA.
However, Microsoft's traces do not feature temporal structure and therefore offline algorithms such as SO-BMA have a significant advantage.

\paragraph{Execution Time.} 

All our simulations on all different traces show that our R-BMA algorithm 
outperforms BMA~\cite{perf20bmatch} with respect to run-time efficiency.
Furthermore, the size of the cache has a smaller impact on the execution time than BMA.
In particular, Figs.~\ref{subfig:fb-database-results-exe},  ~\ref{subfig:fb-webservice-results-exe}, ~\ref{subfig:fb-hadoop-results-exe} show that a larger cache size can lead to a decrease in execution performance of up to~$20\%$ in the case of the BMA algorithm, whereas our randomized algorithm is comparatively more robust to a change in the size of the cache.
The results of the Microsoft cluster in Fig.~\ref{subfig:microsoft-results-exe} also show that the run-time of 
our randomized algorithm grows slower than for BMA.

\paragraph{Summary.} 

Our R-BMA algorithm achieves almost the same routing cost reduction as BMA, while also achieving competitive cost reductions compared to an optimal offline algorithm.
With respect to the run-time efficiency of our randomized algorithm, we find that our algorithm significantly outperforms BMA on all workloads.
We conclude that our algorithm, R-BMA, provides an attractive trade-off between routing cost reduction and run-time efficiency. 

\section{Related Work}\label{sec:relwork}

The design of datacenter topologies has received much attention
in the networking community already. The most widely deployed
networks are based on Clos topologies and multi-rooted fat-trees~\cite{clos,singh2015jupiter,f10},
and there are also interesting designs based on hypercubes~\cite{bcube,mdcube} and expander
graphs~\cite{xpander,jellyfish}.

Existing dynamic and demand-aware datacenter networks can be classified according to the
granularity of reconfigurations. 
Solutions such as Proteus~\cite{proteus}, OSA~\cite{osa}, 
 or DANs~\cite{dan}, among other, are more coarse-granular
and e.g., rely on a (predicted) traffic matrix.
Solutions such as  ProjecToR~\cite{projector,spaa21rdcn}, Cerberus~\cite{griner2021cerberus}, 
MegaSwitch~\cite{megaswitch}, Eclipse \cite{venkatakrishnan2018costly},
Helios~\cite{helios}, Mordia~\cite{mordia}, Duo~\cite{zerwas2023duo}, 
C-Through~\cite{cthrough}, ReNet~\cite{apocs21renets} or SplayNets~\cite{splaynet}
are more fine-granular and support per-flow reconfiguration
and decentralized reconfigurations.
Reconfigurable demand-aware networks may also rely on expander graphs, e.g., Flexspander~\cite{flexspander}, Cerberus~\cite{griner2021cerberus}, or
Duo~\cite{zerwas2023duo}, and are currently also considered as a promising solution to speed up data transfers in supercomputers~\cite{100times,fleet}.
The notion of demand-aware networks raise novel optimization problems related to switch scheduling~\cite{mckeown1999islip}, and recently interesting first insights have been obtained both for offline~\cite{venkatakrishnan2018costly} and for online scheduling~
\cite{schwartz2019online,dinitz2020scheduling,spaa21rdcn,perf20bmatch}.
Due to the increased reconfiguration time 
experienced in demand-aware networks,
many existing demand-aware architectures additionally
rely on a fixed network. 
For example, 
ProjecToR always maintains a
``base mesh'' of connected links that can handle low-latency
traffic while it opportunistically reconfigures free-space links
in response to changes in traffic patterns. 

This paper primarily focuses on the \emph{algorithmic}
problems of demand-aware datacenter architectures. 
Our optimization problem is related to \emph{graph augmentation}
models, which consider the problem of adding edges to a given
graph, so that path lengths are reduced. For example, 
Meyerson and Tagiku~\cite{meyerson2009minimizing} study how to add ``shortcut edges''
to minimize the average shortest path distances, 
Bil{\`o} et al.~\cite{bilo2012improved} and
Demaine and Zadimoghaddam~\cite{demaine2010minimizing}
study how to augment a~network to reduce its diameter,
and there are several interesting results on how to add ``ghost edges'' 
to a graph such that it becomes (more) ``small world''~\cite{ghost-edges,small-world-shortcut,gozzard2018converting}.
However, these edge additions can be optimized globally and in a biased manner, and
hence do not form a matching; we are also not aware of any online versions of this problem. 
The dynamic setting is related to classic switch scheduling
problems~\cite{mckeown1999islip,chuang1999matching}.

Regarding the specific $b$-matching problem considered in this paper, a polynomial-time algorithm for the static 
version of this problem is known for several decades~\cite{Schrij03,anstee1987polynomial}.
Recently, Hanauer et al.~\cite{infocom22matching,infocom23matching} presented several efficient algorithms for the static~\cite{infocom22matching} and dynamic (but offline)~\cite{infocom23matching} problem variant for application in the context of reconfigurable datacenters. 
Bienkowski et al.~\cite{perf20bmatch} initiated the study of an online version of this problem
and presented a $O(b)$-competitive
deterministic algorithm and showed that this is asymptotically optimal.
In this paper, we have shown that a randomized approach can provide a significantly lower competitive
ratio as well as faster runtimes.

Finally, we note that there is a line of papers studying 
(bipartite) online matching
variants
\cite{online-matching,online-matching-simple,adwords-primal-dual,concave-matching,ranking-primal-dual,bipartite-matching-strongly-lp,adwords-lp,adwords-ec}.
This problem attracted significant attention in the last decade because of its connection
to online auctions and the AdWords problem~\cite{adwords-survey}. 
Despite similarity in names (e.g., the
bipartite (static) $b$-matching variant was considered
in~\cite{kalyanasundaram2000optimal}), this model is fundamentally
different from ours.

\section{Conclusion}\label{sec:conclusion}

We revisited the problem of how to schedule reconfigurable links
in a datacenter (based on optical circuit switches or similar technologies),
in order to maximize network utilization. 
To this end, we presented a randomized online algorithm which computes 
heavy matchings between, e.g., datacenter racks,
guaranteeing a significantly lower competitive ratio and faster running
time compared to the state-of-the-art (and asymptotically optimal) 
deterministic algorithm. Our algorithm and its analysis are simple,
and easy to implement (and teach).

That said, our work leaves open several interesting avenues for future research.
In particular, we still lack a non-asymptotic and tight bound on the achievable competitive ratio
both in the deterministic and in the randomized case.
Furthermore, we so far assumed a conservative online perspective, where the algorithm does not have
any information about future requests. In practice, traffic often features temporal structure,
and it would be interesting to explore algorithms which can leverage certain predictions about
future demands, without losing the worst-case guarantees.

\bibliographystyle{plainurl}
\bibliography{refs}

\end{document}